\newcommand{\triadj}{\blacktriangleright}
\newcommand{\tg}{\hat{t}}
\newcommand{\xg}{\hat{x}}
\newcommand{\bolo}[1]{\left\llbracket   #1  \right\rrbracket}
\newlength{\enviropost}
\newcommand{\be}{\begin{equation}}
\newcommand{\ee}{\end{equation}}
\newcommand{\ble}[1]{\begin{equation} \label{#1}}
\newcommand{\bae}{\begin{eqnarray}}
\newcommand{\eae}{\end{eqnarray}}
\newcommand{\fle}[2]%
{\vspace{1.5ex}
\be
\label{#1}
\mbox{%
\setlength{\fboxsep}{3ex}%
\framebox{$\dss #2 $}}
\ee} 
\newcommand{\flec}[2]%
{\vspace{1.5ex}
\be
\label{#1}
\mbox{%
\setlength{\fboxsep}{3ex}%
\framebox{$\dss #2 $}}
\, \, \,  ,
\ee} 
\newcommand{\flep}[2]%
{\vspace{1.5ex}
\be
\label{#1}
\mbox{%
\setlength{\fboxsep}{3ex}%
\framebox{$\dss #2 $}}
\, \, \, .
\ee} 
\newcommand{\equalref}[1]{\stackrel{\rule[-1ex]{0mm}{0mm}%
{}_{(\ref{#1})}}{=}}
\newcommand{\nn}{\nonumber}
\newcommand{\ff}{\nn \\}
\newcommand{\fe}{& = &}
\newtheorem{prop}{Proposition}
\newtheorem{corollary}{Corollary}
\newtheorem{state}{S$\! \!$}
\newtheorem{defin}{D$\! \!$}
\newtheorem{exatitle}{Example}
\newtheorem{problemdef}{Problem}
\newtheorem{soldef}{Solution}
\newenvironment{proof}%
{\noindent \textsc{Proof}:\ }% 
{\hfill $\blacksquare$  \vspace{\enviropost} \\}
\newenvironment{myexample}[2]%
{\begin{exatitle} \label{#2} #1 \end{exatitle}}%
{\hfill $\Box$ \\}
\hfill  \vspace{.5\enviropost} \\}
\hfill \rule[-3mm]{0mm}{0mm}$\Diamond$\\}
\newcommand{\mvb}{\mathversion{bold}}
\newcommand{\mvn}{\mathversion{normal}}
\newcommand{\dss}{\displaystyle}
\newcommand{\id}{\mathop{\rm id}}
\newcommand{\ip}[2]{\left\langle #1, #2\right\rangle}
\newcommand{\ket}[1]{| #1 \rangle}
\newcommand{\eg}{\hbox{\em e.g.{}}}
\newcommand{\etc}{\hbox{\em etc.{}}}
\newcommand{\ie}{\hbox{\em i.e.{}}}
\newcommand{\rhs}{\hbox{r.h.s.{}}}
\newcommand{\calA}{\mathcal{A}}
\newcommand{\calF}{\mathcal{F}}
\newcommand{\calO}{\mathcal{O}}
\newcommand{\calP}{\mathcal{P}}
\newcommand{\calR}{\mathcal{R}}
\newcommand{\calU}{\mathcal{U}}
\title{%
Star Product and Invariant Integration for Lie type\\[1mm] Noncommutative Spacetimes%
}
\author{%
Chryssomalis Chryssomalakos\thanks{%
On sabbatical leave from the Instituto de Ciencias Nucleares, Universidad Nacional Aut\'onoma de M\'exico, Apdo. Postal 70-543, 04510 M\'exico, D.F., M\'EXICO.}\\
School of Applied Sciences\\
Department of Physics and Applied Mathematics\\
National Technical University\\
Zografou Campus, 15780 Athens, Greece\\
E-mail: \email{chryss@nucleares.unam.mx}}
\author{%
Elias Okon\\
Instituto de Ciencias Nucleares\\
Universidad Nacional Aut\'onoma de M\'exico\\
Apdo. Postal 70-543, 04510 M\'exico, D.F., M\'EXICO\\
E-mail: \email{eliokon@nucleares.unam.mx}}
\abstract{%
We present a star product for noncommutative spaces of Lie type, including the so called ``canonical'' case by introducing a central generator, which is compatible with translations and admits a simple, manageable definition of an invariant integral. A quasi-cyclicity property for the latter is shown to hold, which reduces to exact cyclicity when the adjoint representation of the underlying Lie algebra is traceless. Several explicit examples illuminate the formalism, dealing with $\kappa$-Minkowski spacetime and the Heisenberg algebra (``canonical'' noncommutative 2-plane).
}
\keywords{star product, cyclic integral, noncommutative spacetime, noncommutative quantum field theory}
\begin{document}
%%%%%%%%%%%%%%%%%%%%%%%%%%%%%%%%%%%%%%%%%%%%%%%%%%%%%%%%%%%%%%
%%%%%%%%%%%%%%%%%%%%%%%%%%%%%%%%%%%%%%%%%%%%%%%%%%%%%%%%%%%%%%
%%%%%%%%%%%%%%%%%%%%%%%%%%%%%%%%%%%%%%%%%%%%%%%%%%%%%%%%%%%%%%
% Titlepage
%%%%%%%%%%%%%%%%%%%%%%%%%%%%%%%%%%%%%%%%%%%%%%%%%%%%%%%%%%%%%%
%%%%%%%%%%%%%%%%%%%%%%%%%%%%%%%%%%%%%%%%%%%%%%%%%%%%%%%%%%%%%%
%%%%%%%%%%%%%%%%%%%%%%%%%%%%%%%%%%%%%%%%%%%%%%%%%%%%%%%%%%%%%%
\section{Introduction}
\label{Intro}
%%%%%%%%%%%%%%%%%%%%%%%%%%%%%%%%%%%%%%%%%%%%%%%%%%%%%%%%%%%%%%
%%%%%%%%%%%%%%%%%%%%%%%%%%%%%%%%%%%%%%%%%%%%%%%%%%%%%%%%%%%%%%
The motivation behind noncommutative versions of quantum field theory is by now adequately expounded in the literature. The massive output,
in this direction, of the physics-oriented community in the last couple of decades has involved in a substantial way star products, in bewildering numbers, shapes and forms. The present paper adds one more such product in the list, suitable for Lie-type noncommutative spacetimes, and proves several useful properties, using all along explicit examples to clarify the general theorems. What we perceive as virtues for the proposed star product are the following:
\begin{enumerate}
\item 
Its definition is natural and easy to compute, eliminating the need for the plethora of {\em ad hoc} ordering prescriptions other definitions rely on.
\item
It coexists harmoniously with translations, in the sense that the corresponding $\Omega$ map commutes with them.
\item
It admits a simple, manageable definition of a translation-invariant integral --- we compute several such integrals explicitly in various examples.
\item
The above integral is shown to also satisfy the trace property (cyclicity) for a certain class of underlying Lie algebras, while, in the general case, a simple generalization of the trace property is shown to hold.
\end{enumerate}
On the other hand, the list of truly undesirable features of our star product is, in our mind, currently void. Given this happy state of affairs, we believe it is worth presenting in some detail, in the hope that it might prove useful in physical applications. 

As mentioned already, we deal with noncommutative spacetimes of the Lie type, namely, those for which the commutators of the coordinate functions are linear in those same functions. We also adopt a physical point of view, meaning that our reserves of tolerance for the legitimacy of our tools will be considerable, provided they promise to be useful. A natural playground for what is considered here is the $\kappa$-Minkowski spacetime, but the also much studied ``canonical'' noncommutative spacetime fits in our scheme, with minimal modifications --- these two constructs form the backdrop of our examples. 

The subject we touch upon here is well studied, with several hundred papers published in the general area of deformation quantization during the last three decades. Accordingly, we limit our list of references to only those works that bear a direct relation to ours (and that we know of): 
 Refs.~\cite{Dimitrijevic:2003wv,Dimitrijevic:2003pn,Madore:2000en,%
kosinski-2000-50,Aschieri:2006ye,Szabo:2001kg} deal with quantum field theory on noncommutative spacetimes, \cite{Daszkiewicz:2007eq} with noncommutative translations, \cite{Moller:2004sk} with symmetry-invariant integration on $\kappa$-Minkowski, \cite{Zachos:1999mp,kosinski-2001-64,graciabondia-2002-0204,sykora-2004,hirshfeld-2002-298} with more general  star product matters and \cite{Wess:2006cm,Sitarz:1994rh,Majid:1994cy} with noncommutative differential calculi. On the more mathematical side,  Refs.{}~\cite{gutt:1983a,bordemann.neumaier.waldmann:1998a} deal with  star products on the cotangent bundle of a Lie group, their relation to the Kontsevich~\cite{kontsevich:2003a}  star product being explored in~\cite{dito:1999a}. This last reference, along with several others, was brought to our attention by Stefan Waldmann after this paper had essentially been written up --- its Lemma 2 establishes the equivalence of ours and Gutt's  star product, although not its explicit form. Cyclic integrals (trace functionals) for  star products are discussed in~\cite{bieliavsky.bordemann.gutt.waldmann:2003a}, while KMS functionals for symplectic manifolds are dealt with in~\cite{basart.flato.lichnerowicz.sternheimer:1984a,basart.lichnerowicz:1985a,bordemann.roemer.waldmann:1998a}. The quasi-cyclicity property of our integral seems to be new. Further references to the field can be found in the very extensive list at \texttt{http://idefix.physik.uni-freiburg.de/$\sim$star/}.

With so much already said on the subject, it is perhaps inevitable that our construction coincides, in particular cases, with choices already made by other authors. Even then though, our general theorems guarantee properties that may have passed unnoticed, and put various approaches in the physics literature under the same, conceptually  unifying, roof.

Regarding the structure of the paper, a glance at the table of contents should give an idea of its organization. We point out that some of our proofs make use of Hopf algebra techniques that might not be familiar to some of the readers. We have opted to present this material separately in an appendix, so as to not disrupt the flow of our presentation.
%%%%%%%%%%%%%%%%%%%%%%%%%%%%%%%%%%%%%%%%%%%%%%%%%%%%%%%%%%%%%%
%%%%%%%%%%%%%%%%%%%%%%%%%%%%%%%%%%%%%%%%%%%%%%%%%%%%%%%%%%%%%%
\section{The Star Product and Some of Its Properties}
\label{Acspasoip}
%%%%%%%%%%%%%%%%%%%%%%%%%%%%%%%%%%%%%%%%%%%%%%%%%%%%%%%%%%%%%%
%%%%%%%%%%%%%%%%%%%%%%%%%%%%%%%%%%%%%%%%%%%%%%%%%%%%%%%%%%%%%%
%%%%%%%%%%%%%%%%%%%%%%%%%%%%%%%%%%%%%%%%%%%%%%%%%%%%%%%%%%%%%%
\subsection{Definition of the star product}
\label{Dotsp}
%%%%%%%%%%%%%%%%%%%%%%%%%%%%%%%%%%%%%%%%%%%%%%%%%%%%%%%%%%%%%%
Consider the $N$-dimensional Lie algebra $\mathfrak{g}$ with generators $\hat{z}_A$, $A=1,\ldots,N$, and commutation relations 
\be
\label{zzcrs}
[\hat{z}_A, \hat{z}_B]= \lambda {f_{AB}}^C \hat{z}_C
\, ,
\ee
where $\lambda$ is a formal parameter.
We will denote the left invariant vector fields (LIVF) corresponding to $\hat{z}_A$ by $Z_A$. Expressed in terms of coordinates $\zeta^A$ on the corresponding group manifold and the dual partials $\partial_A \equiv \partial/\partial \zeta^A$, they are given by 
\be
\label{Zgs}
Z_A =  \psi_A^{\phantom{A}B} (\zeta) \, \partial_B
\, .
\ee
Evaluated at the identity of the group, $Z_A$ reduces to the partial $\partial_A$, which implies 
\be
\label{psicounit}
\epsilon(\psi_A^{\phantom{A}B})=\delta_A^{\phantom{A}B}
\, ,
\ee
where the {\em counit} $\epsilon(f(\zeta))$ of a function $f(\zeta)$, is the value of the function at the identity of the group. The $\psi_A^{\phantom{A}B}$ also depend on $\lambda$. In the limit $\lambda \rightarrow 0$ of an abelian algebra, $Z_A$ also reduces to the partial $\partial_A$, so that
\be 
\label{limZl}
\lim_{\lambda \rightarrow 0} \psi_A^{\phantom{A}B}(\zeta)=\delta_A^{\phantom{A}B}
\, .
\ee
We use throughout coordinates that vanish at the identity, so that $\epsilon(f(\zeta))=f(0)$. In the sequel, it will prove convenient to use the notation $z_A \equiv \partial/\partial \zeta^A$ --- notice that the $\hat{z}_A$ satisfy the Lie algebra relations, Eq.~(\ref{zzcrs}), while the $z_A$ commute among themselves, like partials do. In this notation,
\be
\label{zzetacrs}
z_A \zeta^B=\delta_A^{\phantom{A}B} + \zeta^B z_A
\, ,
\ee
and
\be 
\label{Zgs2}
Z_A = \psi_A^{\phantom{A}B}(\zeta) z_B
\, .
\ee
We denote by $\tilde{\calU}_{\mathfrak{g}}[[\lambda]] \equiv \hat{\calF}$ the algebra of formal power series in $\lambda$, with coefficients in the (suitably completed) universal enveloping algebra of $\mathfrak{g}$, and think of it as the (noncommutative) algebra of functions on a quantum space. Similarly, $\calF$ will denote formal power series in $\lambda$, with coefficients in the algebra of $C^\infty$ functions on $\mathbb{R}^N$. Our first task is to provide a pair $(\Omega,\star)$, where 
\be
\label{Omegadefgen}
\Omega \colon \hat{\calF} \rightarrow \calF
\ee
is an invertible linear map, and $\star \colon \calF \otimes \calF \rightarrow \calF$ is a (noncommutative) product between elements of $\calF$, such that
\be 
\label{Oscomp}
\Omega(\hat{f}\hat{g})=\Omega(\hat{f})\star \Omega(\hat{g})
\, .
\ee
In the above equation $\hat{f}$ and $\hat{g}$ denote functions of the $\hat{z}$'s, a convention we use throughout the paper --- similarly, symbols like $f$, $g$ \etc, will denote functions of the $z$'s. At this point we introduce a further simplification in the notation, by putting $\Omega(\hat{f}) \equiv f$, in other words, $\hat{f}$ is to be interpreted as $\Omega^{-1}(f)$, and~(\ref{Oscomp}) becomes
\be 
\label{Oscomp1}
\Omega(\hat{f}\hat{g})=f \star g
\, .
\ee
A final ingredient of our construction is a map $\omega$ that applied to a general function $\hat{F}(\zeta,z)$ returns a function of the $z$'s only, as follows: using the commutation relations~(\ref{zzetacrs}), bring all $\zeta$'s in $\hat{F}$ to the left --- denote the resulting expression 
by 
\be 
\label{notation1}
\sum_i F_i^{(\bar{1})}(\zeta) F_i^{(2)}(z) \equiv F^{(\bar{1})}(\zeta) F^{(2)}(z)=\hat{F}(\zeta,z)
\, .
\ee 
Notice the final equality --- all we have done with $\hat{F}$ is write it out using a particular ordering. It is also important to keep in mind the suppressed summation in the middle form of the above equation. $\omega(\hat{F})$ is now defined by
\be 
\label{omegadef}
\omega(\hat{F}(\zeta,z))=\epsilon\big(F^{(\bar{1})}(\zeta)\big)
F^{(2)}(z)=F^{(\bar{1})}(0)F^{(2)}(z)
\, ,
\ee
\ie, after bringing all $\zeta$'s in $\hat{F}$ to the left, we set them equal to zero --- the resulting function of the $z$'s is $\omega(\hat{F})$.

We may now state our definition of $\Omega \big( \hat{f} (\hat{z}) \big)$:
\be 
\label{Omegadef1}
\Omega\big(\hat{f}(\hat{z})\big)=\omega\big(\hat{f}(Z)\big)
\, .
\ee
In a more verbal mood, finding $\Omega(\hat{f}(\hat{z}))$ entails the following procedure
\begin{enumerate}
\item
Make the substitution $\hat{z}_A \mapsto Z_A = \psi_A^{\phantom{A}B}(\zeta) z_B$ in $\hat{f}(\hat{z})$. Notice that there is no ordering ambiguity involved in this step, given that the $Z_A$ satisfy the same Lie algebra as the $\hat{z}_A$.
\item
In the resulting expression, bring all the $\zeta$'s to the left, using the commutation relations~(\ref{zzetacrs}).
\item
Set $\zeta^A=0$ --- the resulting function of the $z$'s is $\Omega(\hat{f}(\hat{z})) = f(z)$.
\end{enumerate}
The $\star$-product is now {\em defined} by~(\ref{Oscomp1}). It should be clear from the above definition, that both $\Omega$ and $\star$ depend not only on the Lie algebra $\mathfrak{g}$, but also on the basis $\{\hat{z}_A\}$ {\em and} the choice of coordinates on the group manifold. Still, all properties stated in what follows, hold true in general.
%%%%%%%%%%%%%%%%%%%%%%%%%%%%%%%%%%%%%%%%%%%%%%%%%%%%%%%%%%%%%%
\subsection{Properties}
\label{Properties}
%%%%%%%%%%%%%%%%%%%%%%%%%%%%%%%%%%%%%%%%%%%%%%%%%%%%%%%%%%%%%%
%%%%%%%%%%%%%%%%%%%%%%%%%%%%%%%%%%%%%%%
\begin{prop}
\label{Omegafg}
For all $\hat{f}$, $\hat{g}$ it holds
\be 
\label{omegaprop1}
\omega\big(\hat{f}(Z)\hat{g}(Z)\big)
=
\omega(f(z)\hat{g}(Z)\big)
\, .
\ee
\end{prop}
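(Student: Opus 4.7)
The plan is to reduce both sides of~(\ref{omegaprop1}) to the same normal-ordered expression, using the fact that $\omega$ is essentially "move all $\zeta$'s to the left and evaluate at $\zeta=0$" and that this operation commutes trivially with scalar factors.

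First I would write $\hat{f}(Z)$ in its normal-ordered form $F^{(\bar{1})}(\zeta)\,F^{(2)}(z)$ (sum over an implicit index suppressed, as in~(\ref{notation1})). By the definition~(\ref{omegadef}) this gives $f(z) = \omega(\hat{f}(Z)) = F^{(\bar{1})}(0)\,F^{(2)}(z)$, where the evaluations $F^{(\bar{1})}(0)$ are numerical coefficients. Then I would consider the product
\[
\hat{f}(Z)\,\hat{g}(Z) = F^{(\bar{1})}(\zeta)\,\bigl[F^{(2)}(z)\,\hat{g}(Z)\bigr],
\]
and bring the $\zeta$'s hiding inside $\hat{g}(Z)$ to the left of $F^{(2)}(z)$ using~(\ref{zzetacrs}). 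Since $F^{(\bar{1})}(\zeta)$ is already in the leftmost slot, it plays no role in this commutation step: denoting the normal-ordered form of the bracketed factor by $H^{(\bar{1})}(\zeta)\,H^{(2)}(z)$, one obtains the normal-ordered expression
\[
\hat{f}(Z)\,\hat{g}(Z) = F^{(\bar{1})}(\zeta)\,H^{(\bar{1})}(\zeta)\,H^{(2)}(z),
\]
whence $\omega\bigl(\hat{f}(Z)\hat{g}(Z)\bigr) = F^{(\bar{1})}(0)\,H^{(\bar{1})}(0)\,H^{(2)}(z)$.

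Next I would compute the right-hand side by the same recipe: since $f(z) = F^{(\bar{1})}(0)\,F^{(2)}(z)$ differs from $F^{(\bar{1})}(\zeta)\,F^{(2)}(z)$ only through replacing the function $F^{(\bar{1})}(\zeta)$ by its value at the identity, and since those values are just numbers that can be pulled through everything,
\[
f(z)\,\hat{g}(Z) = F^{(\bar{1})}(0)\,\bigl[F^{(2)}(z)\,\hat{g}(Z)\bigr] = F^{(\bar{1})}(0)\,H^{(\bar{1})}(\zeta)\,H^{(2)}(z),
\]
using exactly the same normal ordering step as before. Applying $\omega$ yields $F^{(\bar{1})}(0)\,H^{(\bar{1})}(0)\,H^{(2)}(z)$, which agrees with the left-hand side.

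Conceptually the statement just says: setting $\zeta=0$ in a function that already sits on the left commutes with any subsequent normal-ordering performed to its right, because the subsequent manipulation never differentiates it. The only thing I would need to be careful about is notational bookkeeping, in particular keeping the implicit sums (both in $F^{(\bar{1})}\otimes F^{(2)}$ and in the $H^{(\bar{1})}\otimes H^{(2)}$ produced by commuting $F^{(2)}(z)$ past the $\zeta$'s inside $\hat{g}(Z)$) straight; I do not expect any genuine obstacle beyond this.
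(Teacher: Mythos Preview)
Your proposal is correct and follows essentially the same approach as the paper: normal-order $\hat{f}(Z)$ first, then observe that the leftmost $\zeta$-factor $F^{(\bar{1})}(\zeta)$ can be evaluated at $\zeta=0$ immediately because the subsequent reordering of $F^{(2)}(z)\,\hat{g}(Z)$ never touches it. The paper's version is slightly terser---it skips the explicit $H^{(\bar{1})},H^{(2)}$ notation and instead remarks verbally that ``anything with a $\zeta$ to its left will eventually vanish''---but the logic is identical.
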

%%%%%%%%%%%%%%%%%%%%%%%%%%%%%%%%%%%%%%%
\begin{proof}
We have
\bae 
\label{prop1proof1}
\omega\big(
\hat{f}(Z)\hat{g}(Z)
\big)
\fe
\omega\big(
f^{(\bar{1})}(\zeta) f^{(2)}(z) \,
g^{(\bar{1})}(\zeta) g^{(2)}(z)
\big)
\ff
\fe
\omega \big(
f^{(\bar{1})}(0) f^{(2)}(z) \,
g^{(\bar{1})}(\zeta) g^{(2)}(z)
\big)
\ff
\fe
\omega(f(z)\hat{g}(Z)\big)
\, .
\eae
In the first equality, we just reordered the $\zeta$'s and $z$'s in $\hat{f}(Z)$ and $\hat{g}(Z)$ individually. To compute $\omega$, one still needs to commute $g^{(\bar{1})}(\zeta)$ to the left, past $f^{(2)}(z)$ and then set $\zeta=0$. However, one may already set $\zeta=0$ in $f^{(\bar{1})}(\zeta)$, since anything with a $\zeta$ to its left will eventually vanish, regardless of internal reorderings. The third equality is by definition.
\end{proof}
Above we have used the notation $\hat{f}(Z)=f^{(\bar{1})}(\zeta) f^{(2)}(z)$, dropping the hat from the $f$'s in the \rhs. This is consistent with~(\ref{notation1}) and the fact that $f^{(\bar{1})}$ and $f^{(2)}$ are both functions of commuting variables. Since $\Omega$ is invertible (see below), no ambiguity arises: given an $f(z)$, one can compute $\hat{f}(\hat{z})$, and then $f^{(\bar{1})}(\zeta) f^{(2)}(z)$ as in~(\ref{notation1}).
%%%%%%%%%%%%%%%%%%%%%%%%%%%%%%%%%%%%%%%
%%%%%%%%%%%%%%%%%%%%%%%%%%%%%%%%%%%%%%%
\begin{prop}
$\Omega$ is invertible.
\end{prop}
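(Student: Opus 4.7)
My strategy is to exhibit $\Omega$ as a triangular perturbation of a manifest bijection, so that invertibility follows from a geometric-series argument in the formal parameter $\lambda$.

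First, I would apply the three-step recipe directly to a generic monomial,
\[
\Omega(\hat{z}_{A_1}\cdots\hat{z}_{A_n}) = \omega\bigl(\psi_{A_1}^{\phantom{A_1}B_1}(\zeta)\,z_{B_1}\cdots\psi_{A_n}^{\phantom{A_n}B_n}(\zeta)\,z_{B_n}\bigr).
\]
The contribution in which no $z$ collides with any $\zeta$ while reordering collapses, via (\ref{psicounit}), to $z_{A_1}\cdots z_{A_n}$. Every other contribution uses $z_A\zeta^B=\delta_A^{\phantom{A}B}+\zeta^B z_A$ at least once, which strictly lowers the $z$-degree, and by (\ref{limZl}) carries at least one factor of $\lambda$ through a derivative $\partial_C\psi_A^{\phantom{A}B}$ evaluated at $\zeta=0$.

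Consequently, if $\hat{\calF}^{\leq n}$ and $\calF^{\leq n}$ denote the degree-$\leq n$ polynomial subspaces of $\hat{\calF}$ and $\calF$, the map $\Omega$ preserves these filtrations and induces the identity on the associated graded (after the tautological PBW identification of symmetric tensors). Standard filtered-to-graded reasoning then delivers a bijection $\hat{\calF}^{\leq n}\to\calF^{\leq n}$ for every $n$, and hence on the polynomial subalgebras by taking the union.

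Finally, to extend to the full completed algebras I would write $\Omega=\id+\lambda R$ with $R$ strictly degree-lowering and invert via the Neumann series
\[
\Omega^{-1}=\sum_{k\geq 0}(-\lambda R)^k,
\]
which converges in the $\lambda$-adic topology on $\calF = C^\infty(\mathbb{R}^N)[[\lambda]]$ because only finitely many terms contribute at each fixed order in $\lambda$. The main potential obstacle is matching the polynomial-level inverse to the full $C^\infty$ setting, which is really a question about the completion of the universal enveloping algebra; within the formal-power-series-in-$\lambda$ framework used throughout the paper this is standard.
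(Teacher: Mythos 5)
Your argument is correct and is essentially the paper's own proof in more explicit dress: both rest on the computation $\Omega(\hat{z}_{A_1}\cdots\hat{z}_{A_n})=z_{A_1}\cdots z_{A_n}+\calO(\lambda)\cdot(\text{lower degree})$, which the paper phrases as ``unitriangular matrix in a suitably ordered basis, hence invertible'' and you phrase as ``identity plus strictly degree-lowering $\lambda R$, inverted by a Neumann series.'' The Neumann series is just the explicit inverse of that unitriangular matrix, so no new idea is involved, though your version does spell out the $\lambda$-adic convergence that the paper leaves implicit.
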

%%%%%%%%%%%%%%%%%%%%%%%%%%%%%%%%%%%%%%%
\begin{proof}
It is easy to see that $\Omega(\hat{z}_{A_1}\hat{z}_{A_2}\ldots \hat{z}_{A_n})=
z_{A_1}z_{A_2}\ldots z_{A_n}+ \calO(\lambda)$. Thus, with a suitable ordering of the basis elements in $\calF$, $\hat{\calF}$, the matrix of the map $\Omega$ is, say, upper triangular, with units along the diagonal and, hence, invertible. 
\end{proof}
%%%%%%%%%%%%%%%%%%%%%%%%%%%%%%%%%%%%%%%
%%%%%%%%%%%%%%%%%%%%%%%%%%%%%%%%%%%%%%%
\begin{prop}
In the notation of~(\ref{notation1}), it holds
\be 
\label{fstargprop}
f(z) \star g(z) = 
\left(
g^{(\bar{1})}(\partial_z) \triangleright f(z)
\right)
g^{(2)}(z)
\, ,
\ee
where $\cdot \triangleright \cdot$ denotes the action of a differential operator on a function.
\end{prop}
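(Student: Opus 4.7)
The plan is to reduce the identity to a computation of $\omega$ on a specific product of $\zeta$'s and $z$'s, and then use the fact that the commutation relation (\ref{zzetacrs}) makes each $z_A$ act literally as $\partial/\partial\zeta^A$ on any function of $\zeta$.

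By definition of $\star$ and Eq.~(\ref{Omegadef1}),
\[
f \star g \;=\; \Omega(\hat f \hat g) \;=\; \omega\bigl(\hat f(Z)\hat g(Z)\bigr).
\]
Applying the preceding Proposition~\ref{Omegafg} and then unfolding $\hat g(Z) = g^{(\bar 1)}(\zeta)\, g^{(2)}(z)$ via (\ref{notation1}), this becomes
\[
f\star g \;=\; \omega\bigl( f(z)\, g^{(\bar 1)}(\zeta)\, g^{(2)}(z) \bigr).
\]
To evaluate $\omega$, one must bring all $\zeta$'s to the left of all $z$'s; the block $g^{(2)}(z)$ is inert, so the whole task is to commute $g^{(\bar 1)}(\zeta)$ past $f(z)$.

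The key technical step I would prove is the operator identity
\[
f(z)\, \phi(\zeta) \;=\; \sum_\alpha \frac{1}{\alpha !}\, \bigl(\partial^\alpha_\zeta \phi\bigr)(\zeta)\cdot f^{(\alpha)}(z),
\]
valid for any polynomial (or formal power series) $f$ in commuting variables $z_A$ and any $\phi$ in commuting variables $\zeta^A$, where $f^{(\alpha)} := \partial^\alpha f / \partial z^\alpha$ and the sum is over multi-indices $\alpha$. This is just the higher-order Leibniz rule, a direct consequence of (\ref{zzetacrs}), which says $[z_A, \zeta^B] = \delta_A^{\ B}$, i.e.\ that each $z_A$ acts as $\partial/\partial\zeta^A$ on polynomials in $\zeta$. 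I would prove it by induction on the degree of $f$, the base case $[z_A, \phi(\zeta)] = (\partial_A\phi)(\zeta)$ being immediate from (\ref{zzetacrs}).

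Applying $\omega$ (the $\zeta$'s now sit to the left, so this just sets $\zeta = 0$) and multiplying by $g^{(2)}(z)$ on the right yields
\[
\omega\bigl( f(z)\, \phi(\zeta)\, g^{(2)}(z) \bigr)
\;=\; \sum_\alpha \frac{1}{\alpha !}\,(\partial^\alpha \phi)(0)\cdot f^{(\alpha)}(z)\, g^{(2)}(z).
\]
Finally, recognising the Taylor expansion
\[
\phi(\partial_z) \;=\; \sum_\alpha \frac{1}{\alpha!}\,(\partial^\alpha\phi)(0)\, \partial_z^\alpha,
\]
and noting that $\partial_z^\alpha f = f^{(\alpha)}$, the sum collapses to $(\phi(\partial_z)\triangleright f(z))\cdot g^{(2)}(z)$. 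Setting $\phi = g^{(\bar 1)}$ gives (\ref{fstargprop}).

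The only step needing real care is the multi-index Leibniz identity; everything else is a direct unpacking of the definitions of $\Omega$, $\star$ and $\omega$. Since the paper works in a purely algebraic/formal setting, I would spell out the induction rather than invoke operator calculus externally; thereafter, the Taylor rewrite of $\phi(\partial_z)$ closes the argument.
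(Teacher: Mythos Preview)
Your proposal is correct and follows essentially the same route as the paper: reduce to $\omega\bigl(f(z)\,g^{(\bar 1)}(\zeta)\,g^{(2)}(z)\bigr)$ via Proposition~\ref{Omegafg}, then use the commutation relation~(\ref{zzetacrs}) to identify $\omega(f(z)\,\phi(\zeta)\,h(z))$ with $(\phi(\partial_z)\triangleright f(z))\,h(z)$. The paper states this last identity as Eq.~(\ref{omegafgzeta}) after the single-$\zeta$ case~(\ref{omegafzeta}), whereas you spell it out through the multi-index Leibniz rule and a Taylor expansion of $\phi(\partial_z)$; the content is the same.
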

%%%%%%%%%%%%%%%%%%%%%%%%%%%%%%%%%%%%%%%
\begin{proof}
From~(\ref{zzetacrs}) it can be shown that $f(z)\zeta^A=\partial f(z)/\partial z_A+\zeta^A f(z)$, so that 
\be
\label{omegafzeta}
\omega(f(z)\zeta^A)=\frac{\partial f(z)}{\partial z_A}
\, ,
\ee
and, more generally,
\be 
\label{omegafgzeta}
\omega\big(f(z)g(\zeta)h(z)\big)
=
\big(g(\partial_z)\triangleright f(z) \big) h(z)
\, ,
\ee where $g(\partial_z)$ is obtained from $g(\zeta)$ by the substitution $\zeta^A \mapsto \partial_{z_A}$ (notice that $h(z)$ above can be taken outside of $\omega$). From~(\ref{omegaprop1}) then we get
\bae 
f \star g
\fe
\Omega(\hat{f}(\hat{z})\hat{g}(\hat{z}))
\ff
\fe
\omega 
\left(
f(z) g^{(\bar{1})}(\zeta) g^{(2)}(z)
\right)
\ff
& \equalref{omegafgzeta} &
\left(
g^{(\bar{1})}(\partial_z) \triangleright f(z)
\right) \, g^{(2)}(z)
\, ,
\eae
where we used the equality of left and right actions of partials on functions.
\end{proof}
Notice that the above star product is $\mathfrak{g}$-covariant by construction, \ie, it is bilinear and satisfies $[z_A,z_B]_*=\lambda f_{AB}^{\phantom{AB}C}z_C$. Then, Lemma 2 of~\cite{dito:1999a} implies that it is equivalent to the Gutt star product.
%%%%%%%%%%%%%%%%%%%%%%%%%%%%%%%%%%%%%%%%%%%%%%%%%%%%%%%%%%%%%%
\subsubsection{Commutativity with translations}
\label{Cwt}
%%%%%%%%%%%%%%%%%%%%%%%%%%%%%%%%%%%%%%%%%%%%%%%%%%%%%%%%%%%%%%
The statement of further properties necessitates the introduction of a concept of translation, for both commuting and noncommuting coordinates. In the former case, we formalize the usual $z_A \mapsto z_a +z'_A$ operation via the definition of a {\em coproduct} map $\Delta$,
\be 
\label{Deltadef}
\Delta \colon \calF \rightarrow \calF \otimes \calF 
\, ,
\qquad
z_A \mapsto z_A \otimes 1 + 1 \otimes z_A
\, ,
\ee
extended by linearity and multiplicativity to arbitrary functions in $\calF$ ($\alpha$, $\beta \in \mathbb{R}$),
\be 
\label{lineamulti} 
\Delta(\alpha f+\beta g)=\alpha \Delta(f)+\beta \Delta(g)
\, ,
\qquad
\qquad
\Delta(fg)=\Delta(f)\Delta(g)
\, .
\ee
For noncommuting functions, a similar definition works: the coproduct $\hat{\Delta}$ is given by
\be
\label{hDeltadef}
\hat{\Delta} \colon \hat{\calF} \rightarrow \hat{\calF} \otimes \hat{\calF} 
\, ,
\qquad
\hat{z}_A \mapsto \hat{z}_A \otimes 1 + 1 \otimes \hat{z}_A
\, ,
\ee
extended again by linearity and multiplicativity to arbitrary functions in $\hat{\calF}$. The nontrivial aspect of this definition is that $\hat{\Delta}$ is compatible with the algebra structure, namely,
\be 
\label{hDeltaalg}
[\hat{\Delta}(\hat{z}_A), \, \hat{\Delta}(\hat{z}_B)]
=\lambda f_{AB}^{\phantom{AB}C} \hat{\Delta}(\hat{z}_C)
\, ,
\ee
where the product in $\hat{\calF} \otimes \hat{\calF}$ is given by
$(\hat{f} \otimes \hat{g})(\hat{h} \otimes \hat{r})=\hat{f}\hat{h} \otimes \hat{g}\hat{r}$. As can be appreciated already from~(\ref{Deltadef}), (\ref{hDeltadef}), the coproduct of a function, commutative or not, involves, in general, a sum over tensor products --- the so-called Sweedler notation is standard,
\be 
\label{Sweedler}
\Delta(f)=\sum_i f^i_{(1)} \otimes f_{(2)}^i \equiv 
f_{(1)} \otimes f_{(2)}
\, ,
\ee
and similarly for $\hat{\Delta}(\hat{f})$ (which coproduct is used, $\Delta$ or $\hat{\Delta}$, can be inferred from its argument).
 
We may now state the very important 
%%%%%%%%%%%%%%%%%%%%%%%%%%%%%%%%%%%%%%%
\begin{prop}
\label{OmegaDeltaprop}
$\Omega$ commutes with translations, in the sense that
\be 
\label{OmegaDelta}
\big( \Omega \otimes \Omega \big) \hat{\Delta}(\hat{f})
= 
\Delta \big( \Omega(\hat{f}) \big)
=
\Delta(f)
\quad \text{or, in Sweedler notation,}
\quad
\Omega(\hat{f}_{(1)}) \otimes \Omega(\hat{f}_{(2)})
=
f_{(1)} \otimes f_{(2)}
\, .
\ee
\end{prop}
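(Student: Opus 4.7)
My plan is to show that both sides of (\ref{OmegaDelta}) are $\mathbb{C}[[\lambda]]$-algebra homomorphisms from $(\hat{\calF},\cdot)$ into $(\calF\otimes\calF,\star_{12})$, where $\star_{12}$ denotes the componentwise $\star$-product $(a\otimes b)\star_{12}(c\otimes d):=(a\star c)\otimes(b\star d)$. Since a homomorphism out of $\hat{\calF}$ is determined by its values on the generators $\hat{z}_A$, it then suffices to check the identity on each $\hat{z}_A$; and for $\hat{f}=\hat{z}_A$ both sides evaluate, by direct inspection, to $z_A\otimes 1+1\otimes z_A$.

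That the left-hand map $(\Omega\otimes\Omega)\circ\hat{\Delta}$ is an algebra homomorphism is almost immediate: $\hat{\Delta}$ is multiplicative by the very definition (\ref{hDeltadef})--(\ref{hDeltaalg}), and (\ref{Oscomp}) gives $(\Omega\otimes\Omega)\bigl((\hat{a}\otimes\hat{b})(\hat{c}\otimes\hat{d})\bigr)=(\Omega(\hat{a})\star\Omega(\hat{c}))\otimes(\Omega(\hat{b})\star\Omega(\hat{d}))$, which is exactly the $\star_{12}$-product of the images. For the right-hand map $\Delta\circ\Omega$, one factor, $\Omega\colon(\hat{\calF},\cdot)\to(\calF,\star)$, is a homomorphism by~(\ref{Oscomp1}); what remains is the key lemma
\begin{equation}
\label{keylemma}
\Delta(f\star g)\;=\;\Delta(f)\star_{12}\Delta(g)\qquad(f,g\in\calF),
\end{equation}
i.e.\ that $\Delta$ is itself a $\star$-algebra homomorphism. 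I view this as the heart of the statement: it is really the translation-invariance of the star product, and everything else is bookkeeping.

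To establish~(\ref{keylemma}) I would induct on the polynomial degree of $g$. The base case $g=z_A$ is the clean one: Proposition~3 with $g^{(\bar 1)}(\zeta)g^{(2)}(z)=\psi_A^{\phantom{A}B}(\zeta)z_B$ gives $f\star z_A=\bigl(\psi_A^{\phantom{A}B}(\partial_z)\triangleright f\bigr)z_B$. Since the ordinary coproduct satisfies $\Delta\circ\partial_{z_A}=(\partial_{z_A}\otimes 1)\circ\Delta=(1\otimes\partial_{z_A})\circ\Delta$ on the image of $\Delta$, and $\Delta$ is multiplicative for the commutative product, one obtains
\begin{equation*}
\Delta(f\star z_A)=\sum\bigl(f_{(1)}\star z_A\bigr)\otimes f_{(2)}\;+\;\sum f_{(1)}\otimes\bigl(f_{(2)}\star z_A\bigr),
\end{equation*}
by choosing the first-factor derivative on the piece multiplied by $z_B\otimes 1$ and the second-factor derivative on the piece multiplied by $1\otimes z_B$. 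The right-hand side of~(\ref{keylemma}) with $g=z_A$, expanded via $\Delta(z_A)=z_A\otimes 1+1\otimes z_A$, gives exactly the same sum. The inductive step then uses associativity of $\star$: if (\ref{keylemma}) holds for $g$ and $g'$ with $\deg g'=1$, write $g\star g'$ and use $\Delta(f\star(g\star g'))=\Delta((f\star g)\star g')$, applying the $g'$ case first and then the $g$ case.

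The main obstacle is~(\ref{keylemma}); once past it, the proposition follows by the homomorphism-agreement argument above. An alternative, which would bypass the explicit inductive calculation, is to work dually: recognise that $Z_A\otimes 1+1\otimes Z_A$ are the left-invariant vector fields on the product group $G\times G$ and reinterpret $\omega\otimes\omega$ on the product as $\omega_{G\times G}$; this is precisely the kind of Hopf-algebraic argument deferred to the appendix, and I would expect it to make~(\ref{keylemma}) a one-line consequence of the multiplicativity of $\hat{\Delta}$ together with the fact that $\hat{z}_A\mapsto Z_A$ is a Lie-algebra morphism.
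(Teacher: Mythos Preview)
Your argument is correct and takes a genuinely different route from the paper's own proof.

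The paper proves the proposition by direct induction on the degree of $\hat{f}$: assuming the identity for $\hat{f}$, it computes both $\Delta\bigl(\Omega(\hat{f}\,\hat{z}_A)\bigr)$ and $(\Omega\otimes\Omega)\hat{\Delta}(\hat{f}\,\hat{z}_A)$ explicitly, using the formula $\Omega(\hat{f}\hat{z}_A)=\ip{f_{(1)}}{\psi_A^{\phantom{A}B}}f_{(2)}z_B$ and the primitivity of $z_B$, and checks the two agree by coassociativity. In the paper's logical order, the $\star$-homomorphism property of $\Delta$ (Proposition~5) is then a \emph{corollary} of this result. You invert that dependency: you prove your key lemma~(\ref{keylemma}) --- which is exactly Proposition~5 --- independently by induction on the degree of $g$, and then deduce Proposition~4 by the clean ``two homomorphisms agreeing on generators'' argument. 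Your base case $g=z_A$ uses the same formula $f\star z_A=(\psi_A^{\phantom{A}B}(\partial_z)\triangleright f)z_B$ from Proposition~3, together with the identity $\Delta\circ\partial_{z_A}=(\partial_{z_A}\otimes 1)\circ\Delta=(1\otimes\partial_{z_A})\circ\Delta$ on symmetric elements, so the computational core is the same; what differs is the packaging. Your approach makes the structural content more transparent (everything reduces to checking on generators once both maps are homomorphisms), and it yields Propositions~4 and~5 simultaneously rather than sequentially. The paper's approach, by contrast, avoids having to verify associativity of $\star_{12}$ and is slightly more self-contained at the level of the single proposition. One small point worth making explicit in your inductive step: the elements $z_{A_1}\star\cdots\star z_{A_n}$ span $\calF$ (since $\star$ deforms the commutative product and $\Omega$ is invertible), so inducting along $g\mapsto g\star z_A$ really does cover all of $\calF$ by linearity.
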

%%%%%%%%%%%%%%%%%%%%%%%%%%%%%%%%%%%%%%%
The proof is given in the appendix. It is worth emphasizing that the $\star$-product is {\em not} translation invariant, \ie, in general,
\be 
\label{starnotinv}
(f \star g)(z+a) \neq f(z+a) \star g(z+a)
\, .
\ee
This is due to the fact that the $\star$-operation involves explicitly the coordinates. One may use the notation $\star_z$ to denote this explicitly, in which case translation invariance holds in the form
\begin{equation}
\label{starinv}
(f \star_z g)(z+a) = f(z+a) \star_{z+a} g(z+a)
\, .
\end{equation}
%%%%%%%%%%%%%%%%%%%%%%%%%%%%%%%%%%%%%%%
%%%%%%%%%%%%%%%%%%%%%%%%%%%%%%%%%%%%%%%%%%%%%%%%%%%%%%%%%%%%%%
\subsubsection{$\star$-product homomorphism}
\label{Sph}
%%%%%%%%%%%%%%%%%%%%%%%%%%%%%%%%%%%%%%%%%%%%%%%%%%%%%%%%%%%%%%
\begin{prop}
$\Delta$ is an algebra homomorphism of the $\star$-product:
\be
\Delta(f \star g)
=
\Delta(f) \star \Delta(g) 
\equiv 
f_{(1)} \star g_{(1)} \otimes f_{(2)} \star g_{(2)} 
\, .
\ee
\end{prop}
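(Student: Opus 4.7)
The plan is to reduce everything to Proposition~\ref{OmegaDeltaprop} together with the defining property~(\ref{Oscomp1}) of $\Omega$ and the fact that $\hat{\Delta}$ is, by construction, an algebra homomorphism of $\hat{\calF}$. The strategy is simply to chase the element $\hat{f}\hat{g}$ around the square and use these three ingredients in turn.

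First I would write $f = \Omega(\hat{f})$ and $g = \Omega(\hat{g})$, so that by~(\ref{Oscomp1}) we have $f \star g = \Omega(\hat{f}\hat{g})$. Applying $\Delta$ and using Proposition~\ref{OmegaDeltaprop} to pull $\Delta$ through $\Omega$, we get
\begin{equation}
\Delta(f \star g) = \Delta\bigl(\Omega(\hat{f}\hat{g})\bigr) = (\Omega \otimes \Omega)\,\hat{\Delta}(\hat{f}\hat{g}).
\end{equation}
Next, I would invoke the fact that $\hat{\Delta}$ is multiplicative on $\hat{\calF}$ — this is built into its definition~(\ref{hDeltadef}), extended by linearity and multiplicativity, and consistency with the algebra structure is precisely the content of~(\ref{hDeltaalg}). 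Hence $\hat{\Delta}(\hat{f}\hat{g}) = \hat{f}_{(1)}\hat{g}_{(1)} \otimes \hat{f}_{(2)}\hat{g}_{(2)}$ in Sweedler notation.

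Now I would apply $\Omega \otimes \Omega$ componentwise and use~(\ref{Oscomp1}) again in each tensor slot:
\begin{equation}
(\Omega \otimes \Omega)\bigl(\hat{f}_{(1)}\hat{g}_{(1)} \otimes \hat{f}_{(2)}\hat{g}_{(2)}\bigr) = \bigl(\Omega(\hat{f}_{(1)}) \star \Omega(\hat{g}_{(1)})\bigr) \otimes \bigl(\Omega(\hat{f}_{(2)}) \star \Omega(\hat{g}_{(2)})\bigr).
\end{equation}
Finally, applying Proposition~\ref{OmegaDeltaprop} separately to $\hat{f}$ and $\hat{g}$ replaces $\Omega(\hat{f}_{(1)}) \otimes \Omega(\hat{f}_{(2)})$ by $f_{(1)} \otimes f_{(2)}$ and likewise for $g$, yielding $f_{(1)} \star g_{(1)} \otimes f_{(2)} \star g_{(2)} = \Delta(f) \star \Delta(g)$.

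There is no real obstacle here: the proposition is a formal consequence of bundling together the homomorphism property of $\Omega$, the fact that $\hat{\Delta}$ respects the noncommutative product on $\hat{\calF}$, and the intertwining property already established in Proposition~\ref{OmegaDeltaprop}. The only mild point to be careful about is to make sure that the $\star$-product on $\calF \otimes \calF$ is understood as acting tensor-factor by tensor-factor (so the statement is literally a commutative-diagram chase), and to keep track of the implicit Sweedler sums.
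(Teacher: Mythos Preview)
Your proof is correct and follows essentially the same route as the paper's: use~(\ref{Oscomp1}) to pass to $\Omega(\hat{f}\hat{g})$, apply Proposition~\ref{OmegaDeltaprop} to swap $\Delta$ with $\Omega\otimes\Omega$, use multiplicativity of $\hat{\Delta}$, and then~(\ref{Oscomp1}) again in each tensor factor. The only cosmetic difference is that the paper absorbs your final invocation of Proposition~\ref{OmegaDeltaprop} (identifying $\Omega(\hat{f}_{(i)})$ with $f_{(i)}$) into its notational convention $\Omega(\hat{f})\equiv f$, writing the last step in one line.
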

\begin{proof}
\bae
\Delta(f \star g) 
& \equalref{Oscomp1} & 
\Delta \Omega( \hat{f} \, \hat{g}) 
\ff
& \equalref{OmegaDelta} &
\big( \Omega \otimes \Omega \big)
(\hat{\Delta}(\hat{f} \,  \hat{g}))
\ff
& \equalref{lineamulti} &
\Omega \left(\hat{f}_{(1)} \, \hat{g}_{(1)} \right) 
\otimes 
\Omega \left(\hat{f}_{(2)} \, \hat{g}_{(2)} \right) 
\ff
& \equalref{Oscomp1} & 
f_{(1)} \star g_{(1)} \otimes f_{(2)} \star g_{(2)}
\, .  
\nn
\eae
\end{proof}
%%%%%%%%%%%%%%%%%%%%%%%%%%%%%%%%%%%%%%%%%%%%%%%%%%%%%%%%%%%%%%
\subsubsection{Translation-invariant integration}
\label{Tii}
%%%%%%%%%%%%%%%%%%%%%%%%%%%%%%%%%%%%%%%%%%%%%%%%%%%%%%%%%%%%%%
The above concept of translation for a function $\hat{f}(\hat{z})$ permits the definition of a translation-invariant integral 
$\bolo{\hat{f}(\hat{z})}$ as follows
\be 
\label{intdef}
\bolo{\hat{f}(\hat{z})} 
\equiv 
\int_{-\infty}^{\infty}  f(z)\, d^{n}z 
\equiv
\int f 
\, ,
\ee
assuming the \rhs{} exists.
Translation-invariance is defined and proved in
\begin{prop}
\label{transinvprop}
$\bolo{\cdot}$ is left- and right-invariant, 
\ie, it satisfies
\be
\label{tinvdef}
\bolo{\hat{f}} 1 =
\hat{f}_{(1)} \bolo{\hat{f}_{(2)}}
=
\bolo{\hat{f}_{(1)}} \hat{f}_{(2)}
\, .
\ee
\end{prop}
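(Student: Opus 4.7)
The plan is to transport the identity in $\hat{\calF}$ over to $\calF$ via the invertible map $\Omega$, where it reduces to the standard translation-invariance of Lebesgue measure. The key input is Proposition \ref{OmegaDeltaprop}, which states $\Omega(\hat{f}_{(1)}) \otimes \Omega(\hat{f}_{(2)}) = f_{(1)} \otimes f_{(2)} = \Delta(f)$. Since $\bolo{\hat{f}_{(2)}}$ (resp.\ $\bolo{\hat{f}_{(1)}}$) is a scalar in $\mathbb{C}[[\lambda]]$, both sides of the claimed equality live in $\hat{\calF}$ and are manipulated by ordinary (not $\star$-) scalar multiplication, so $\Omega$ simply commutes with taking $\bolo{\cdot}$ of either tensor factor.

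Concretely, I first observe that $\Omega(1)=\omega(1)=1$, so applying $\Omega$ to the desired equality $\bolo{\hat{f}}\cdot 1 = \hat{f}_{(1)}\bolo{\hat{f}_{(2)}}$ reduces it, by invertibility of $\Omega$, to the statement
\begin{equation}
\bolo{\hat{f}}\cdot 1_{\calF} \;=\; f_{(1)}(z)\,\bolo{\hat{f}_{(2)}} \;=\; \sum_i f_{(1)}^i(z)\int f_{(2)}^i(z')\,d^N z'
\end{equation}
in $\calF$. Next I invoke the fact that the coproduct $\Delta$ on commutative functions, defined by $z_A \mapsto z_A\otimes 1 + 1\otimes z_A$ and extended multiplicatively, is precisely the dualized translation map, so $\Delta(f)(z,z') = f(z+z')$ (one verifies this on monomials and extends by linearity/density in the formal power series sense).

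Combining these, I can rewrite the right-hand side as $\int f(z+z')\,d^N z'$, and the standard translation-invariance of Lebesgue measure on $\mathbb{R}^N$ gives $\int f(z+z')\,d^N z' = \int f(z')\,d^N z' = \bolo{\hat{f}}$, independent of $z$, which matches the left-hand side. The second equality $\bolo{\hat{f}_{(1)}}\hat{f}_{(2)}=\bolo{\hat{f}}\cdot 1$ is obtained by the mirror argument, integrating out the first tensor factor of $\Delta(f)(z,z')=f(z+z')$ over the variable $z$.

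The only genuinely nontrivial step is Proposition \ref{OmegaDeltaprop} itself, which is quoted from the appendix; once that is in hand, the rest is essentially just unwinding Sweedler notation and using invariance of $d^N z$. A minor point that must be checked is that the rearrangement $\sum_i f_{(1)}^i(z)\int f_{(2)}^i(z')\,d^N z' = \int \sum_i f_{(1)}^i(z)f_{(2)}^i(z')\,d^N z'$ is legitimate, but this follows because the Sweedler sum is finite (or convergent term-by-term in $\lambda$) under the standing assumption that $\int f$ exists.
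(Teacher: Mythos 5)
Your proposal is correct and follows essentially the same route as the paper's own proof: both reduce the statement to the translation invariance of the ordinary integral on $\mathbb{R}^N$ by interpreting $\Delta(f)$ as $f(z+z')$ and transporting the identity between $\hat{\calF}$ and $\calF$ via Proposition~\ref{OmegaDeltaprop} and the invertibility of $\Omega$ (the paper merely runs the computation in the opposite direction, starting from $1\int f$ and applying $\Omega^{-1}$ at the end). Your remark on the legitimacy of interchanging the Sweedler sum with the integral is a point the paper leaves implicit.
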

\begin{proof}
We prove left invariance, right invariance is proved analogously.
\bae
1 \bolo{\hat{f}}
\fe 
1\int f  
\ff 
\fe 
f_{(1)} \int f_{(2)} 
\ff
& \equalref{OmegaDelta} &
\Omega(\hat{f}_{(1)}) \int \Omega(\hat{f}_{(2)})
\ff
\fe
\Omega(\hat{f}_{(1)}) \bolo{\hat{f}_{(2)}}
\, , 
\nn
\eae
where, in the second equality, the translation invariance of the standard integral in $\mathbb{R}^N$ has been used.
But $\Omega^{-1}(1)=1$, so taking $\Omega^{-1}$ on both sides we obtain%
\footnote{%
Strictly speaking one should differentiate betwen the unit function $1_\calF$ and the unit $1_{\hat{\calF}}$.%
}
\be
1\bolo{\hat{f}}
=  
\hat{f}_{(1)}  \bolo{\hat{f}_{(2)}}
\, . \nn
\ee
\end{proof}
%%%%%%%%%%%%%%%%%%%%%%%%%%%%%%%%%%%%%%%%%%%%%%%%%%%%%%%%%%%%%%
\subsubsection{Quasi-cyclicity property of the integral}
\label{Tlp}
%%%%%%%%%%%%%%%%%%%%%%%%%%%%%%%%%%%%%%%%%%%%%%%%%%%%%%%%%%%%%%
Define the constants $c_A \equiv f_{AB}^{\phantom{AB}B}$. 
\begin{prop}
\label{trace}
$\bolo{\cdot}$ satisfies the {\em quasi-cyclicity} property
\be 
\label{tracedef}
\bolo{\hat{f}(\hat{z}) \hat{g}(\hat{z})}
=
\bolo{\hat{g}(\hat{z}+\lambda c) \hat{f}(\hat{z})}
\, .
\ee
\end{prop}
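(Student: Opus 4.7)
My plan is to convert the claim, which lives in the noncommutative algebra $\hat{\calF}$, into an equivalent statement about ordinary Lebesgue integrals on $\mathbb R^N$ via $\Omega$ and Proposition~3, and then prove it by a careful integration by parts. Writing $\hat g(Z)=g^{(\bar 1)}(\zeta)g^{(2)}(z)$ in the notation of (\ref{notation1}), Proposition~3 gives
\begin{equation*}
\bolo{\hat f\hat g}=\int (f\star g)\,d^Nz=\int \bigl(g^{(\bar 1)}(\partial_z)\triangleright f(z)\bigr)\,g^{(2)}(z)\,d^Nz .
\end{equation*}
I then integrate by parts --- adopting throughout the standing assumption that all surface terms at infinity vanish on the class of functions on which $\bolo{\cdot}$ is finite. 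Since $g^{(\bar 1)}(\partial_z)$ is a (formal) polynomial in constant-coefficient partials, this amounts to sending each $\partial_A$ onto $g^{(2)}$ with a sign flip, equivalent to $\zeta\mapsto -\zeta$ inside $g^{(\bar 1)}$:
\begin{equation*}
\bolo{\hat f\hat g}=\int f(z)\,\bigl(g^{(\bar 1)}(-\partial_z)\triangleright g^{(2)}(z)\bigr)\,d^Nz =\int f(z)\,\omega\bigl(g^{(2)}(z)\,g^{(\bar 1)}(-\zeta)\bigr)\,d^Nz ,
\end{equation*}
the second equality being (\ref{omegafgzeta}) read from right to left.

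The heart of the proof is to identify the $f$-independent factor $\omega\bigl(g^{(2)}(z)\,g^{(\bar 1)}(-\zeta)\bigr)$ with $\Omega$ applied to the centrally shifted symbol $\hat g(\hat z+\lambda c)$. My approach is to push the $\zeta$'s back to the left in $g^{(2)}(z)\,g^{(\bar 1)}(-\zeta)$ using~(\ref{zzetacrs}): each elementary swap of a $z_A$ past a $\zeta^B$ deposits a Kronecker delta $\delta_A{}^B$, and summing these contributions over the contracted indices appearing in the normal-ordered form of $\hat g(Z)=\hat g(\psi(\zeta)z)$ produces a net additive shift of the argument, with the trace $c_A=f_{AB}{}^B$ emerging from contractions of structure constants supplied by the expansion of $\psi_A{}^B(\zeta)$ (this is essentially the non-unimodularity of the underlying group made visible in the LIVF). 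Once this identification is in hand, one more application of Proposition~3 --- now in reverse, with $\Omega(\hat g(\hat z+\lambda c))$ playing the role of the left factor --- rewrites the integral as $\int \Omega\bigl(\hat g(\hat z+\lambda c)\hat f\bigr)\,d^Nz = \bolo{\hat g(\hat z+\lambda c)\hat f}$, closing the argument.

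The main obstacle is the combinatorial step: controlling the all-order sum of reordering corrections and showing that they assemble to exactly the central shift $\lambda c$, with no residual $\zeta$-dependence. A less direct but technically lighter route, which would likely serve as a sanity check before (or in lieu of) the bookkeeping above, is to prove the identity first on exponential symbols $\hat f=e^{ik\cdot\hat z}$, $\hat g=e^{iq\cdot\hat z}$: by BCH and the centrality of $c$ one has $\hat f\hat g=e^{iB(k,q)\cdot\hat z}$ and $\hat g(\hat z+\lambda c)\hat f=e^{i\lambda q\cdot c}\,e^{iB(q,k)\cdot\hat z}$, so the proposition reduces to a scalar identity for the function $F(p):=\bolo{e^{ip\cdot\hat z}}$, verifiable by explicit computation of $\Omega(e^{ip\cdot\hat z})$ together with the Jacobian relating the group's exponential coordinates to $z$. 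The general case then follows by linearity and the formal-series density of exponentials.
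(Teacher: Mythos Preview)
Your opening move is sound: integration by parts gives $\bolo{\hat f\hat g}=\int f\,L(g)\,d^Nz$ with $L(g):=g^{(\bar 1)}(-\partial_z)\triangleright g^{(2)}(z)$. But the pointwise identification you aim for, $L(g)=\Omega\bigl(\hat g(\hat z+\lambda c)\bigr)$, is \emph{false} --- the obstacle is not combinatorial bookkeeping but a wrong target. In $\kappa$-Minkowski take $\hat g=\hat t\,\hat x$: then $g=\Omega(\hat g)=xt+\lambda x$ and, from~(\ref{h1b2Mexp}), $L=e^{-\lambda\partial_t\,x\partial_x}$ as an operator on $\calF$, so $L(g)=xt$; whereas $\Omega\bigl((\hat t+\lambda)\hat x\bigr)=xt+2\lambda x$. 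The ``reverse Proposition~3'' step is also off: even if $L(g)$ did equal $\tilde g:=\Omega(\hat g(\hat z+\lambda c))$, you would be left with $\int f\,\tilde g$, while $\bolo{\hat g(\hat z+\lambda c)\hat f}=\int\tilde g\star f=\int\tilde g\,L(f)$, not $\int\tilde g\,f$.

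What actually closes the argument along your lines is an \emph{adjoint} relation, $L^*=\sigma^{-1}L$, where $\sigma$ is the commutative shift $g\mapsto\tilde g$ (in $\kappa$-Minkowski, $\sigma=e^{\lambda\partial_t}$). The formal adjoint of $x\partial_x$ under $\int dx$ is $-(1+x\partial_x)$, and the surplus $-1$ is precisely the trace $c_t=f_{tx}{}^x$; this yields $L^*=e^{-\lambda\partial_t}L=\sigma^{-1}L$, and then
\[
\bolo{\hat f\hat g}=\int f\,L(g)=\int(L^*f)\,g=\int(Lf)(\sigma g)=\int\tilde g\,L(f)=\bolo{\hat g(\hat z+\lambda c)\hat f}\, .
\]
So the modular shift $\lambda c$ enters through the defect of self-adjointness of $L$, not through any pointwise equality of $L(g)$ with $\tilde g$. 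Establishing $L^*=\sigma^{-1}L$ in general is the real content you are missing.

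For comparison, the paper's proof is structurally different and bypasses this analytic step: it stays inside $\hat{\calF}$, uses the Hopf-algebraic commutation $\hat f\hat g=\hat g^{(1)}\langle\hat f_{(1)},\hat g^{(2)'}\rangle\hat f_{(2)}$ (right adjoint coaction $\Delta_\calA$), and inserts an antipode via the abstract translation invariance of $\bolo{\cdot}$ to extract an operator $K(\hat g)$ with $\bolo{\hat f\hat g}=\bolo{K(\hat g)\hat f}$. Multiplicativity of $K$ reduces everything to the generators, where $K(Z_A)=Z_A+c_A$ follows directly from $\langle Z_B,M^B{}_A\rangle=f_{BA}{}^B$. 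Your exponential/BCH route may also be completable, but the decisive Jacobian identity is left unargued.
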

The proof is given in the appendix.
\begin{corollary}
If the adjoint representation of $\mathfrak{g}$ is traceless, $\bolo{\cdot}$ is cyclic.
\end{corollary}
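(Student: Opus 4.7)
The plan is to observe that this corollary is an immediate specialization of Proposition~\ref{trace}. The key identification is that the constants $c_A \equiv f_{AB}^{\phantom{AB}B}$ are, up to the formal parameter $\lambda$, exactly the traces of the matrices representing the generators $\hat{z}_A$ in the adjoint representation of $\mathfrak{g}$.

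First I would make this identification fully explicit. The adjoint action is $\operatorname{ad}(\hat{z}_A)(\hat{z}_B) = [\hat{z}_A, \hat{z}_B] = \lambda f_{AB}^{\phantom{AB}C} \hat{z}_C$, so in the basis $\{\hat{z}_B\}$ the matrix representing $\operatorname{ad}(\hat{z}_A)$ has entries $\lambda f_{AB}^{\phantom{AB}C}$, whose trace is $\lambda f_{AB}^{\phantom{AB}B} = \lambda c_A$. Hence the hypothesis that the adjoint representation is traceless (on each generator, and therefore on all of $\mathfrak{g}$ by linearity) is equivalent to $c_A = 0$ for every $A = 1,\ldots,N$.

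Next I would substitute this vanishing into the quasi-cyclicity identity~(\ref{tracedef}). With $c = 0$, the shifted argument $\hat{z} + \lambda c$ reduces to $\hat{z}$, so $\hat{g}(\hat{z}+\lambda c) = \hat{g}(\hat{z})$ as an element of $\hat{\calF}$, and~(\ref{tracedef}) collapses to
\be
\bolo{\hat{f}(\hat{z})\,\hat{g}(\hat{z})} = \bolo{\hat{g}(\hat{z})\,\hat{f}(\hat{z})},
\ee
which is precisely the cyclicity (trace) property of $\bolo{\cdot}$.

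There is essentially no obstacle here: the corollary is a one-line consequence of Proposition~\ref{trace} once one recognizes the vanishing of $c_A$ as the traceless-adjoint condition. The only minor point worth mentioning is that the Taylor expansion of $\hat{g}(\hat{z}+\lambda c)$ in powers of the shift is well defined term by term and reduces unambiguously to $\hat{g}(\hat{z})$ at $c=0$; this is automatic because each $c_A$ is a scalar commuting with everything in $\hat{\calF}$, so no ordering subtleties can complicate the substitution.
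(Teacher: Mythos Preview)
Your proposal is correct and matches the paper's approach: the paper states this corollary without proof, treating it as an immediate consequence of Proposition~\ref{trace} via the identification $c_A = f_{AB}^{\phantom{AB}B}$, which is exactly what you spell out. There is nothing to add.
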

This should be compared to Lemma 3.3 in~\cite{bieliavsky.bordemann.gutt.waldmann:2003a}.
\begin{corollary}
\label{cozc}
$\bolo{\cdot}$ is invariant under the $c$-number translation $\hat{z}_A \mapsto \hat{z}_A + \lambda c_A$. 
\end{corollary}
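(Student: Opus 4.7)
The plan is to obtain Corollary~\ref{cozc} as an immediate specialization of Proposition~\ref{trace}. Concretely, I would set $\hat{f}(\hat{z})=1$ in the quasi-cyclicity identity~(\ref{tracedef}): the left-hand side collapses to $\bolo{\hat{g}(\hat{z})}$, while the right-hand side becomes $\bolo{\hat{g}(\hat{z}+\lambda c)\cdot 1}=\bolo{\hat{g}(\hat{z}+\lambda c)}$. The resulting equality is exactly the invariance statement $\bolo{\hat{g}(\hat{z})}=\bolo{\hat{g}(\hat{z}+\lambda c)}$ for every $\hat{g}\in\hat{\calF}$, which is what the corollary asserts.

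Before writing this up, I would briefly justify the two minor interpretive points that underlie the one-line argument. First, the unit $1\in\hat{\calF}$ is a legitimate choice of $\hat{f}$ in Proposition~\ref{trace}, since the proposition is stated for arbitrary elements of $\hat{\calF}$ and no invertibility or degree hypothesis is imposed. Second, the symbol $\hat{g}(\hat{z}+\lambda c)$ is well-defined as the result of replacing each $\hat{z}_A$ by $\hat{z}_A+\lambda c_A$ throughout $\hat{g}$: the shifts $\lambda c_A$ are central $c$-numbers, so the shifted generators still obey the same commutation relations~(\ref{zzcrs}) as the $\hat{z}_A$, and hence the substitution defines an algebra endomorphism of $\hat{\calF}$ with no ordering ambiguity.

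There is essentially no obstacle here: all the substantive work — establishing that a constant shift by $\lambda c$ is the price paid for moving a factor from the right to the left inside $\bolo{\cdot}$ — is already contained in Proposition~\ref{trace}, whose proof is deferred to the appendix. The corollary is simply the content of that proposition when one of the two factors is trivial. (Alternatively, one could apply quasi-cyclicity twice, to $\hat f \hat g$ and then to the resulting $\hat{g}(\hat{z}+\lambda c)\hat{f}(\hat{z})$, obtaining $\bolo{\hat{f}(\hat{z})\hat{g}(\hat{z})}=\bolo{\hat{f}(\hat{z}+\lambda c)\hat{g}(\hat{z}+\lambda c)}$; but the $\hat f=1$ specialization is cleaner and already suffices.)
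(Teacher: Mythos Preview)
Your proposal is correct and matches the paper's own proof exactly: the paper simply writes ``Put $\hat{f}=1$ in~(\ref{tracedef}).'' Your additional remarks on why $1$ is an admissible choice and why the shift by the central constants $\lambda c_A$ is well-defined are sound elaborations, but the core argument is identical.
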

\begin{proof}
Put $\hat{f}=1$ in~(\ref{tracedef}).
\end{proof}
This last property deserves a comment. The invariance of 
$\bolo{\cdot}$ in the sense of~(\ref{tinvdef}) can be stated as follows: let $\hat{z}'$ be a second copy of the generators of $\mathfrak{g}$, satisfying the same Lie algebra, but commuting with the $\hat{z}$'s. Then 
\be 
\label{hatg1}
\hat{g}(\hat{z}+\hat{z}')
=
\hat{g}_{(1)}(\hat{z})\hat{g}_{(2)}(\hat{z}')
\, ,
\ee
and $\bolo{\hat{g}_{(1)}(\hat{z})} \hat{g}_{(2)}(\hat{z}')
=
\bolo{\hat{g}(\hat{z})} 1$. But for this property to hold, it is essential that the ``translation parameters'' $\hat{z}'$ commute with the $\hat{z}$'s {\em and satisfy the Lie algebra relations among themselves}. Yet, the $\lambda c$'s in Corollary~(\ref{cozc}), although they do commute with the $\hat{z}$'s, it would seem that they don't satisfy the Lie algebra relations, being numbers. Or do they? Well, the fact is they do, since $f_{AB}^{\phantom{AB}R}c_R=0$, as can be seen starting from the Jacobi identity for the structure constants. Thus, the $\lambda c$'s are, in some sense, both commuting and noncommuting entities. 
%%%%%%%%%%%%%%%%%%%%%%%%%%%%%%%%%%%%%%%%%%%%%%%%%%%%%%%%%%%%%%
\section{Examples}
\label{Examples}
%%%%%%%%%%%%%%%%%%%%%%%%%%%%%%%%%%%%%%%%%%%%%%%%%%%%%%%%%%%%%%
%%%%%%%%%%%%%%%%%%%%%%%%%%%%%%%%%%%%%%%%%%%%%%%%%%%%%%%%%%%%%%
\subsection{\mvb $1+1$  $\kappa$-Minkowski spacetime\mvn}
%%%%%%%%%%%%%%%%%%%%%%%%%%%%%%%%%%%%%%%%%%%%%%%%%%%%%%%%%%%%%%
The $1+1$ $\kappa$-Minkowski algebra is given by
\be
[\hat{t}, \hat{x}]= \lambda \, \hat{x} 
\ee
(we use $\lambda$, instead of the more usual $\kappa$, as deformation parameter). On the corresponding group manifold we use coordinates $\xi$, $\tau$ associated with the representation 
\ble{grep}
\rho(g)=\left(
\begin{array}{cc}
e^{\lambda \tau} & \xi
\\
0 & 1
\end{array}
\right)
\, ,
\ee
so that $\epsilon(\xi)=\epsilon(\tau)=0$.
The associated LIVFs are,
\be
\label{XTdef}
X = e^{\lambda \, \tau} x 
\, , 
\qquad
\qquad
T = t 
\, ,
\ee
where $x \equiv \partial_\xi$, $t \equiv \partial_\tau$. 
%%%%%%%%%%%%%%%%%%%%%%%%%%%%%%%%%%%%%%%%%%%%%%%%%%%%%%%%%%%%%%
\subsubsection{\mvb $(\Omega,\star)$ for $\kappa$-Minkowski\mvn}
%%%%%%%%%%%%%%%%%%%%%%%%%%%%%%%%%%%%%%%%%%%%%%%%%%%%%%%%%%%%%%
From the above, it is clear that
\be
\label{Omegaxmtn}
\Omega( \hat{x}^m \, \hat{t}^n)=\Omega(e^{m\lambda \tau} x^m \, t^n) =x^m t^n
\, .
\ee
It follows that, for a function $F(x,t)=\sum_i f_i(x)h_i(t)$, we have%
\footnote{%
This result coincides with the ``time to the right'' ordering various authors adopt, see, \eg,~\cite{Majid:1994cy}.
}
\ble{omegaiF}
\Omega^{-1}(F(x,t))=\sum_i f_i(\hat{x})h_i(\hat{t})
\, .
\ee 
In particular, putting $r \equiv e^{i(\alpha x +\beta t)}$, we find
\ble{omegaiexp}
\hat{r}=\Omega^{-1}(e^{i(\alpha x +\beta t)})
=
e^{i\alpha \hat{x}}e^{i\beta \hat{t}}
\, ,
\ee
so that
\bae
\label{omegaiexp2}
r^{(\bar{1})}  r^{(2)}
& \equalref{omegaiexp} &
e^{i\alpha X} e^{i\beta T}
\ff
& \equalref{XTdef} &
e^{i\alpha e^{\lambda \tau} x} e^{i\beta t}
\ff
\fe
e^{i\alpha(e^{\lambda \tau}-1)x} e^{i(\alpha x +\beta t)}
\ff
\fe
\sum_{n=0}^\infty
\frac{1}{n!}(i\alpha)^n (e^{\lambda \tau}-1)^n  x^n e^{i(\alpha x +\beta t)}
\ff
\fe
\sum_{n=0}^\infty
\frac{1}{n!} (e^{\lambda \tau}-1)^n  x^n \partial_x^n e^{i(\alpha x +\beta t)}
\, ,
\eae
a result that, using linearity, extends to arbitrary functions $h(x,t)$,
\ble{h1b2M}
h^{(\bar{1})}(\xi,\tau)   h^{(2)}(x,t)= 
\sum_{n=0}^\infty
\frac{1}{n!} (e^{\lambda \tau}-1)^n   x^n \partial_x^n h(x,t)
\, . 
\ee
Hard as it might look,~(\ref{h1b2M}) can be summed nicely in an exponential, due to the interesting formula
\be\label{remfor}
\sum_{n=0}^\infty
\frac{1}{n!} (e^a-1)^n z(z-1)\ldots (z-(n-1))=e^{az}
\, ,
\ee
the derivation of which is given in the appendix, and the fact that $x^n \partial_x^n=x\partial_x(x\partial_x-1)\ldots (x\partial_x-(n-1))$. Thus, we find
\be\label{h1b2Mexp}
h^{(\bar{1})}   h^{(2)}=e^{\lambda \tau x\partial_x}\triangleright h
\, ,
\ee
and the product $f \star h$ becomes, in standard notation,
\be\label{fstarh2}
f \star h= f e^{\lambda \overleftarrow{\partial_t}x \overrightarrow{\partial_x}} h
\, .
\ee
%%%%%%%%%%%%%%%%%%%%%%%%%%%%%%%%%%%%%%%%%%%%%%%%%%%%%%%%%%%%%%
\subsubsection{\mvb Invariant integration for $\kappa$-Minkowski\mvn}
%%%%%%%%%%%%%%%%%%%%%%%%%%%%%%%%%%%%%%%%%%%%%%%%%%%%%%%%%%%%%%
Under an integral sign, integration by parts allows us to write~(\ref{fstarh2}) in the forms
\ble{fstarhf}
\int
f\star h
=
\int
\left(
e^{-\lambda\partial_t \partial_x x} f
\right)
h
=
\int
f
\left(
e^{-\lambda\partial_t x\partial_x} 
h
\right)
\, ,
\ee
where $\int$ stands for $\iint_{-\infty}^\infty 
dx dt$.

The case where $h$ is a delta function is of particular interest, as it provides a measure of the fuzziness of the $\star$-product,
\ble{fdelta}
\bar{f}(x_0,t_0) \equiv \int f(x,t) \star \delta(x-x_0,t-t_0)
= 
\left( 
e^{-\lambda\partial_t\partial_x x} f
\right)(x_0,t_0)
\, .
\ee
For $(x_0,t_0)=(0,0)$, the above expression reduces to 
\ble{fdelta0}
\bar{f}(0,0)
= 
f(0,-\lambda)
\, ,
\ee
since $(\partial_x x)^n|_{x=0}=(1+x\partial_x)^n|_{x=0}=1$, due to the leftmost $x$ in all terms in the binomial expansion except the first one. Notice that, from~(\ref{fdelta0}), one may {\em not} deduce that $\bar{f}(x_0,t_0)=f(x_0,t_0-\lambda)$, as $\int f(x,t) \star \delta(x-x_0,t-t_0) \neq \int f(x+x_0,t+t_0) \star \delta(x,t)$, in general, as mentioned already in the previous section.

\begin{myexample}{Invariance and quasi-cyclicity of the integral in $\kappa$-Minkowski}{PsixkM}
The following formulas are easily established and will be useful in what follows
\be
g (\tg) \xg^m 
=
\xg^m g(\tg+m\lambda)
\, ,
\qquad
\qquad
e^{\alpha \tg} g(\xg) 
=
g(e^{\alpha \lambda} \xg) e^{\alpha \tg}
\, .
\ee
We also compute $c_x = 0$, $c_t = 1$. Consider the function $\hat{F} = e^{ \alpha \tg} e^{-\xg^2} e^{-\tg^2}$ with integral
\bae
\bolo{\hat{F}} \fe \int e^{-e^{2 \alpha \lambda}x^2} e^{ \alpha t} e^{-t^2} \ff
\fe \pi e^{\frac{\alpha^2}{4}}e^{- \alpha \lambda} \nn \, ,
\eae 
and coproduct
\be
\Delta(\hat{F}) = \sum_{a=0}^{\infty}\sum_{b=0}^{\infty}\frac{(-2)^a (-2)^b}{(a!) (b!)}\left( e^{ \alpha \tg} e^{-\xg^2}\xg^a e^{-\tg^2}\tg^b   \otimes e^{ \alpha \tg} e^{-\xg^2}\xg^a e^{-\tg^2}\tg^b   \right) \nn \, .
\ee
We compute
\bae
\bolo{\hat{F}_{(1)}} \hat{F}_{(2)} 
\fe
\sum_{a=0}^{\infty}\sum_{b=0}^{\infty}\frac{(-2)^a (-2)^b}{(a!) (b!)} \bolo{e^{ \alpha \tg} e^{-\xg^2}\xg^a e^{-\tg^2}\tg^b}   e^{ \alpha \tg} e^{-\xg^2}\xg^a e^{-\tg^2}\tg^b
\ff
\fe \pi e^{- \alpha \lambda} e^{ \alpha \tg}  e^{-\xg^2} \left( e^{-2 \xg \partial_s} \triangleright e^{\frac{1}{4}  s^2}\right)_{s=0}  e^{-\tg^2} \left(e^{-2 \tg \partial_\alpha} \triangleright e^{\frac{\alpha^2}{4}}\right)
\ff
\fe \pi e^{- \alpha \lambda} e^{\frac{\alpha^2}{4}} \nn \, ,
\eae
where we used
$\bolo{e^{ \alpha \tg} e^{-\xg^2}\xg^a e^{-\tg^2}\tg^b}
=
\pi  e^{- \alpha \lambda}  \left( \partial_s^a \triangleright e^{\frac{1}{4} s^2}\right)_{s=0} \left(\partial_\alpha^b \triangleright e^{\frac{\alpha^2}{4}}\right)$. Similarly
\bae
\bolo{\hat{F}_{(1)}} F_{(2)} 
\fe 
\pi e^{- \alpha \lambda}   e^{-e^{2 \alpha \lambda}x^2}  \left( e^{-2 e^{ \alpha \lambda} x \, \partial_s} \triangleright e^{\frac{1}{4}s^2}\right)_{s=0} e^{\alpha t} e^{-t^2} \left(e^{-2 t \partial_\alpha} \triangleright e^{\frac{\alpha^2}{4}}\right) 
\ff 
\fe \pi e^{- \alpha \lambda} e^{\frac{\alpha^2}{4}} \, . \nn
\eae
Therefore,
$
\bolo{\hat{F}}
=
\bolo{\hat{F}_{(1)}} \hat{F}_{(2)}
=
\bolo{ \hat{F}_{(1)}} F_{(2)}
$.
Next, we consider the integral
\bae 
\bolo{\hat{F}(\xg+\lambda c_x, \tg+\lambda c_t)} 
\fe 
\bolo{e^{ \alpha (\tg + \lambda)}e^{-\xg^2} e^{-(\tg+ \lambda)^2}}
\ff
\fe 
\int e^{-e^{2 \alpha \lambda}x^2} e^{ \alpha (t + \lambda)} e^{-(t + \lambda)^2}
\ff
\fe 
\int e^{-e^{2 \alpha \lambda}x^2} e^{ \alpha t} e^{-t^2}  
\ff
\fe 
\pi e^{- \alpha \lambda} e^{\frac{\alpha^2}{4}}\nn \, ,
\eae
so that $\bolo{\hat{F}( \hat{z})} = \bolo{\hat{F}( \hat{z}+\lambda c)}$.
Finally, we can write $\hat{F}= \hat{f} \, \hat{g}$ with $\hat{f}=e^{ \alpha \tg}$ and $\hat{g}=e^{-\xg^2}e^{-\tg^2}$, and calculate
\bae
\bolo{\hat{g}(\xg,\tg) \hat{f}(\xg-\lambda c_x, \tg-\lambda c_t)}
\fe 
\bolo{e^{-\xg^2} e^{-\tg^2} e^{ \alpha (\tg-\lambda)}} 
\ff
\fe  
e^{- \alpha \lambda} \int e^{-x^2}  e^{-t^2} e^{ \alpha t}  
\ff
\fe \pi e^{- \alpha \lambda} e^{\frac{\alpha^2}{4}}\nn \,,
\eae
showing that $\bolo{\hat{f}( \hat{z})\hat{g}( \hat{z})} = \bolo{\hat{g}( \hat{z}) \hat{f}( \hat{z}-\lambda c)}$. Notice that~(\ref{Omegaxmtn}) implies that, in this case, $\Omega\big(\hat{g}(\hat{z}+a)\big)=g(z+a)$, for $a$ a number, so that the star product version of~(\ref{tracedef}) is
\begin{equation}
\label{starvtr}
\int f(z) \star g(z)=\int g(z+\lambda c) \star f(z)
\, .
\end{equation}
\end{myexample}
%%%%%%%%%%%%%%%%%%%%%%%%%%%%%%%%%%%%%%%%%%%%%%%%%%%%%%%
%%%%%%%%%%%%%%%%%%%%%%%%%%%%%%%%%%%%%%%%%%%%%%%%%%%%%%%%%%%%%%
\subsubsection{\mvb $\delta_\lambda$ for $\kappa$-Minkowski\mvn}
%%%%%%%%%%%%%%%%%%%%%%%%%%%%%%%%%%%%%%%%%%%%%%%%%%%%%%%%%%%%%%
It is interesting to ponder whether there exists an object
$\delta_\lambda(x,t;x_0,t_0)$ such that
\ble{fdeltal}
\int f(x,t) \star \delta_\lambda(x,t;x_0,t_0)
= 
f(x_0,t_0)
\, .
\ee
Notice that we have carefully refrained from assuming that $\delta_\lambda$ only depends on differences of its arguments. 
Using the second form of~(\ref{fstarhf}) in~(\ref{fdeltal}) we get
\be\label{deltalM}
\delta_\lambda(x,t;x_0,t_0)
=
e^{\lambda \partial_t x\partial_x} \delta(x-x_0,t-t_0)
\, .
\ee

%%%%%%%%%%%%%%%%%%%%%%%%%%%%%%%%%%%%%%%%%%%%%%%%%%%%%%%%%%%%%%
\subsection{The Heisenberg algebra (``canonical'' NC plane)}
%%%%%%%%%%%%%%%%%%%%%%%%%%%%%%%%%%%%%%%%%%%%%%%%%%%%%%%%%%%%%%	
The Heisenberg algebra is a three-dimensional real Lie algebra with only nonzero commutator given by
\ble{Heisendef}
[\hat{x}, \hat{z}]= \lambda \, \hat{y} 
\, .
\ee
Notice that, since $\hat{y}$ is central, we are dealing essentially with the so-called ``canonical'' NC 2-plane.
On the Heisenberg group manifold we use coordinates $\xi$, $\psi$, $\zeta$, associated with the representation 
\ble{grepH}
\rho(g)=\left(
\begin{array}{ccc}
1 & \xi & \psi
\\
0 & 1 & \zeta
\\
0 & 0 & 1
\end{array}
\right)
\, ,
\ee
so that $\epsilon(\xi)=\epsilon(\psi)=\epsilon(\zeta)=0$.
The associated LIVFs are,
\ble{XYZdef}
X = x 
\, , 
\qquad
\qquad
Y = y 
\, , 
\qquad
\qquad
Z = z + \lambda \xi y
\, ,
\ee
where $x \equiv \partial_\xi$, $y \equiv \partial_\psi$, $z \equiv \partial_\zeta$. 
It is clear that
\ble{Ozyx}
\Omega( \hat{z}^m \, \hat{y}^n \, \hat{x}^k)=
\Omega
\big( 
(z+\lambda \xi y)^m \, y^n \, x^k) =z^m y ^n x^k
\, .
\ee
It follows that, for a function $F(x,y,z)=\sum_i f_i(z)h_i(y)s_i(x)$, we have
\ble{omegaiFH}
\Omega^{-1}(F(x,y,z))=\sum_i f_i(\hat{z})h_i(\hat{y})s_i(\hat{x})
\, .
\ee 
In particular, with $r \equiv e^{i(\alpha x +\beta y +\gamma z)}$, we find
\ble{omegaiexpH}
\hat{r}=\Omega^{-1}(e^{i(\alpha x +\beta y + \gamma z)})
=
e^{i\gamma \hat{z}}e^{i\beta \hat{y}}e^{i\alpha \hat{x}}
\, ,
\ee
so that
\bae
\label{omegaiexp2H}
r^{(\bar{1})}  r^{(2)}
& \equalref{omegaiexpH} &
e^{i\gamma (z+\lambda \xi y)} e^{i\beta y}e^{i\alpha x}
\ff
\fe
e^{i\gamma \lambda \xi y} e^{i(\alpha x +\beta y +\gamma z)}
\ff
\fe
\sum_{n=0}^\infty
\frac{\lambda^n}{n!} \xi^n  y^n 
\partial_z^n e^{i(\alpha x +\beta y +\gamma z)}
\, ,
\eae
a result that, by linearity, extends to arbitrary functions $h(x,y,z)$,
\ble{h1b2H}
h^{(\bar{1})}(\xi,\psi,\zeta) h^{(2)}(x,y,z)= 
\sum_{n=0}^\infty
\frac{\lambda^n}{n!} \xi^n  y^n 
\partial_z^n h(x,y,z)
\, . 
\ee
For the product $f \star h$ we find
\ble{fstarhH}
f \star h
=
(h^{(\bar{1})}(\partial) \triangleright f)h^{(2)}
\equalref{h1b2H}
\sum_{n=0}^\infty
\frac{\lambda^n}{n!} y^n (\partial_x^n f)
= 
f e^{\lambda y\overleftarrow{\partial_x} \overrightarrow{\partial_z}} h
\, ,
\ee
which gives, in particular, $x \star z=xz +\lambda y$, $z\star x=zx$, so that $[x,z]_\star=\lambda y$.

Under an integral sign we may use integration by parts to get
\ble{fstarh2H}
\int
f\star h
=
\int
\sum_{n=0}^\infty
\frac{1}{n!} (-\lambda )^n y^n \left(
\partial_x^n \partial_z^n f
\right) \, h
=
\int
\left( e^{-\lambda y \partial_x \partial_z}f \right) h
\, ,
\ee
where $\int$ stands for $\iiint_{-\infty}^\infty 
dx dy dz$.

For $h$ equal to a delta function we compute,
\ble{fdeltaH}
\bar{f}(x_0,y_0,z_0) 
\equiv 
\int f(x,y,z) \star \delta(x-x_0,y-y_0,z-z_0)
= 
\left( 
e^{-\lambda y \partial_x \partial_z}f
\right)(x_0,y_0,z_0)
\, .
\ee
For $y_0=0$, the above expression reduces to 
\ble{fdelta0H}
\bar{f}(x_0,0,z_0)
= 
f(x_0,0,z_0)
\, .
\ee

It is interesting to check the quasi-cyclicity property of the integral, Eq.~(\ref{tracedef}), in this case. We find that the constants $c_A$ of Proposition~\ref{Tlp} are zero, so that the integral is cyclic. Using~(\ref{fstargprop}), we compute
\ble{deltafH}
\int\delta(x-x_0,y-y_0,z-z_0) \star e^{-(x^2+y^2+z^2)/2}
=
\left(
e^{\lambda y z\partial_x-\frac{\lambda^2}{2} y^2 \partial_x^2}
\triangleright
e^{-(x^2+y^2+z^2)/2}
\right)(x_0,y_0,z_0)
\, ,
\ee
which, by cyclicity, should be equal to the \rhs{} of~(\ref{fdeltaH}), with $f$ equal to the gaussian.
Switching to quantum harmonic oscillator notation, $a_z=(\partial_z+z)/\sqrt{2}$, $a_z^\dagger=(-\partial_z+z)/\sqrt{2}$, and recognizing the gaussian as the ground state wavefunction, we find
\begin{align}
\label{cycliccomp1}
e^{\frac{\lambda}{\sqrt{2}} y \partial_x a_z^\dagger}
e^{-\frac{\lambda}{\sqrt{2}} y \partial_x a_z} \ket{0}
&=
e^{\frac{\lambda}{\sqrt{2}} y \partial_x a_z^\dagger}
e^{\frac{\lambda}{\sqrt{2}} y \partial_x a_z} \ket{0}
\ff
&=
e^{\lambda yz\partial_x - \frac{\lambda^2}{4} y^2\partial_x^2}
\ket{0}
\, ,
\end{align}
where, in the first equality, the sign of $a_z$ can be changed freely, as it annihilates the ground state, while the second follows from the Baker-Cambell-Hausdorff (BCH) formula.
On the other hand, 
\be 
e^{\frac{\lambda}{\sqrt{2}} y \partial_x a_z^\dagger}
e^{-\frac{\lambda}{\sqrt{2}} y \partial_x a_z}
=
e^{-\lambda y \partial_x \partial_z 
+\frac{\lambda^2}{4} y^2 \partial_x^2}
\, ,
\nn
\ee
by direct application of the BCH formula,
so that, comparing with~(\ref{cycliccomp1}), 
\be 
\label{cycliccomp2}
e^{\lambda yz\partial_x - \frac{\lambda^2}{4} y^2\partial_x^2} \ket{0}
=
e^{-\lambda y \partial_x \partial_z 
+\frac{\lambda^2}{4} y^2 \partial_x^2} \ket{0}
\, ,
\ee
from which the desired equality follows.

For $\delta_\lambda$ we find
\be\label{deltalH}
\delta_\lambda(x,y,z;x_0,y_0,z_0)=e^{\lambda y \partial_x \partial_z} \delta(x-x_0,y-y_0,z-z_0)
\, .
\ee
It is clear that in the above formulas the variable $y$ may be given
consistently a numerical value, as $\partial_y$ appears nowhere. Thus, by setting, \eg, $y=1$, one obtains $[x,z]_\star=\lambda$, \ie, the ``canonical'' NC 2-plane. One may effect this substitution, bearing in mind that $y$ can be reinstated in any expression by letting $\lambda \mapsto \lambda y$.
Eq.~(\ref{deltalH}) shows that, for fixed $y$, $\delta_\lambda(x,z;x_0,z_0)$ only depends on the differences $x-x_0$, $z-z_0$.
%%%%%%%%%%%%%%%%%%%%%%%%%%%%%%%%%%%%%%%%%%%%%%%%%%%%%%%%%%%%%%
%%%%%%%%%%%%%%%%%%%%%%%%%%%%%%%%%%%%%%%%%%%%%%%%%%%%%%%%%%%%%%
\section{Concluding Remarks}
\label{CR}
%%%%%%%%%%%%%%%%%%%%%%%%%%%%%%%%%%%%%%%%%%%%%%%%%%%%%%%%%%%%%%
%%%%%%%%%%%%%%%%%%%%%%%%%%%%%%%%%%%%%%%%%%%%%%%%%%%%%%%%%%%%%%
We have presented a star product for Lie-type noncommutative spacetimes, the corresponding $\Omega$-map, and a translation-invariant and quasi-cyclic integral, along with examples drawn from $\kappa$-Minkowski and canonical noncommutative spacetimes. 
A natural continuation of our work would be the development of applications to quantum field theory, as in, for example, Ref.{}~\cite{Aschieri:2006ye} --- this we defer to a future publication. Interesting technical questions that should also be addressed include:
\begin{itemize}
\item
The implementation in the formalism of symmetries, in particular, Poincar\'e covariance, perhaps appropriately deformed.
\item
The study of the transformation properties of the integral $\bolo{\cdot}$ under the above symmetry operations.
\item
Analytical issues associated to the pseudodifferential operators encountered.
\item
The structure of the $\Omega$-map, and the determination of the explicit form of the equivalence with Gutt's star product.
\end{itemize}
We plan on elucidating at least some of these in the near future.
%%%%%%%%%%%%%%%%%%%%%%%%%%%%%%%%%%%%%%%%%%%%%%%%%%%%%%%%%%%%%%
%%%%%%%%%%%%%%%%%%%%%%%%%%%%%%%%%%%%%%%%%%%%%%%%%%%%%%%%%%%%%%
\section*{Acknowledgments}
\label{Ack}
%%%%%%%%%%%%%%%%%%%%%%%%%%%%%%%%%%%%%%%%%%%%%%%%%%%%%%%%%%%%%%
%%%%%%%%%%%%%%%%%%%%%%%%%%%%%%%%%%%%%%%%%%%%%%%%%%%%%%%%%%%%%%
C.{} C.{} would like to thank his colleagues at NTUA and, in particular, George Zoupanos and Konstantinos Anagnostopoulos, for hospitality, and multifaceted support. The authors wish to thank Stefan Waldmann for bringing to their attention several relevant references. They also acknowledge partial financial support from DGAPA-UNAM projects IN 121306-3 and IN 108103-3.
%%%%%%%%%%%%%%%%%%%%%%%%%%%%%%%%%%%%%%%%%%%%%%%%%%%%%%%%%%%%%%
%%%%%%%%%%%%%%%%%%%%%%%%%%%%%%%%%%%%%%%%%%%%%%%%%%%%%%%%%%%%%%
\appendix
%%%%%%%%%%%%%%%%%%%%%%%%%%%%%%%%%%%%%%%%%%%%%%%%%%%%%%%%%%%%%%
%%%%%%%%%%%%%%%%%%%%%%%%%%%%%%%%%%%%%%%%%%%%%%%%%%%%%%%%%%%%%%
\section{Proofs}
\label{Proofs}
%%%%%%%%%%%%%%%%%%%%%%%%%%%%%%%%%%%%%%%%%%%%%%%%%%%%%%%%%%%%%%
%%%%%%%%%%%%%%%%%%%%%%%%%%%%%%%%%%%%%%%%%%%%%%%%%%%%%%%%%%%%%%
We give in this appendix various proofs that use in an essential way Hopf algebra techniques, familiarity with which is assumed. To establish the notation, we start by collecting some standard facts --- \cite{Maj:95} is an appropriate reference.

The action $g(\partial_z) \triangleright f(z)$ that appears in~(\ref{fstargprop}), is given by
\be\label{actder}
g(\partial_z) \triangleright f(z)
=
f_{(1)} \ip{g}{f_{(2)}}
=
f_{(2)} \ip{g}{f_{(1)}}
\, ,
\ee
where $\Delta(f)$ is the coproduct inferred from the primitive one of the $z$'s, Eq.~(\ref{Deltadef}), and 
\be\label{ipgf}
\ip{g(\partial_z)}{f(z)}=
\left. g(\partial_z) \triangleright f(z) \right\vert_{z=0}
\, .
\ee
The commutation relations between $g$ and $f$ above can be written as
\be\label{fgcrsD}
g(\partial_z)f(z)=
\left(g_{(1)}(\partial_z) \triangleright f(z)\right) g_{(2)}(\partial_z)
=
f_{(1)} \ip{g_{(1)}}{f_{(2)}} g_{(2)}
\, .
\ee

We use the standard Sweedler notation for the coproduct of $\psi(\zeta) \in \calA$, where $\calA$ is the algebra of functions on the group manifold,
\be 
\label{copgroup}
\Delta(\psi(\zeta)) = \psi_{(1)}(\zeta) \otimes \psi_{(2)}(\zeta)
\ee
(the use of the same symbol $\Delta$ as for the coproduct in $\calF$ should cause no confusion as the nature of the map can be inferred from that of its argument).
The semidirect product commutation relations with the LIVF $Z$ are given by 
\be
\label{Zpsicrs}
Z \psi=\psi_{(1)} \ip{Z_{(1)}}{\psi_{(2)}} Z_{(2)}
\, ,
\ee
where $\Delta(Z)$ is the primitive one, and the inner product $\ip{\cdot}{\cdot}\colon \mathfrak{g} \otimes \calA \rightarrow \mathbb{R}$ is defined by
\be
\label{ipdef}
\ip{Z_A}{\psi(\zeta)}= \left. \frac{\partial \psi(\zeta)}{\partial \zeta^A}\right\vert_{\zeta=0}
\, .
\ee
Eq.~(\ref{Zpsicrs}) can be extended from $\mathfrak{g} \otimes \calA$ to $\hat{\calF} \otimes \calA$ by multiplicativity and linearity of $\hat{\Delta}$, and the relation
\be\label{ipextend}
\ip{XY}{\psi}=\ip{X}{\psi_{(1)}}\ip{Y}{\psi_{(2)}}
\, ,
\ee
with $X$, $Y$ in $\hat{\calF}$.

The {\em left adjoint action} $\triadj \colon \hat{\calF} \otimes \hat{\calF} \rightarrow \hat{\calF}$, defined on $\mathfrak{g} \otimes \mathfrak{g}$ by
\be 
\label{adjointacdef}
Z_A \triadj Z_B = f_{AB}^{\phantom{AB}C} Z_C
\, ,
\ee
is extended to $\hat{\calF} \otimes \hat{\calF}$ by linearity in both arguments and the relations
\be\label{DAextend}
(XY) \triadj V=X \triadj(Y \triadj V)
\, ,
\qquad
\qquad
X \triadj (YV)=(X_{(1)} \triadj Y)(X_{(2)} \triadj V)
\, .
\ee
The {\em right adjoint coaction}
\be 
\label{adjointcodef}
\Delta_\calA \colon \hat{\calF} \rightarrow \hat{\calF} \otimes \calA
\, ,
\qquad
X \mapsto \Delta_\calA(X) \equiv X^{(1)} \otimes X^{(2)'}
\, ,
\ee
is dual to the adjoint action $\triadj$, 
in the sense that
\be\label{duality}
X \triadj Y = Y^{(1)} \ip{X}{Y^{(2)'}}
\, ,
\ee
and satisfies dualized versions of~(\ref{DAextend}),
\be\label{Ddualv}
(\Delta_\calA \otimes \id)\Delta_\calA
=
(\id \otimes \Delta)\Delta_\calA
\, ,
\qquad
\qquad
\Delta_\calA(XY)=\Delta_\calA(X) \Delta_\calA(Y)
\, .
\ee 
Putting $\Delta_\calA(Z_B) \equiv Z_C \otimes M^C_{\phantom{C}B}$ one obtains 
\be\label{TMip}
\ip{Z_A}{M^C_{\phantom{R}B}}=f_{AB}^{\phantom{AB}C}
\, .
\ee
It also holds 
\be\label{Mepsilon}
\epsilon(M^A_{\phantom{A}B})
\equiv
\ip{1}{M^A_{\phantom{A}B}}
=
\delta^A_{\phantom{A}B}
\, .
\ee
The commutation relations in $\hat{\calF}$ are given by
\be\label{hFcrs}
XY
=
(X_{(1)} \triadj Y) X_{(2)} 
=
Y^{(1)} \ip{X_{(1)}}{Y^{(2)'}} X_{(2)}
\, .
\ee
\textbf{Proposition~\ref{OmegaDeltaprop}.}
\begin{proof}
The proof goes by induction. Since $\omega(Z_A)=\partial_A$, it is trivial to check that the property holds for $\hat{z}_A$. Next, assume that it is true for $\hat{f}(\hat{z})$, a polynomial of order $n$ in the $\hat{z}_A$, 
and consider the polynomial of order $n+1$, $\hat{f}(\hat{z}) \hat{z}_A$.

\noindent On one hand,
\bae 
\Omega\left(\hat{f}(\hat{z})\hat{z}_A\right)
& \equalref{Omegadef1} &
\omega\left(\hat{f}(Z) Z_A\right)
\ff
& \equalref{omegaprop1} &
\omega\left(f(z)Z_A\right)
\ff
& \equalref{Zgs} &
\omega\left(f(z)\psi_A^{\phantom{A}B} z_B\right)
\ff
& \equalref{fgcrsD} &
\omega\left(
{\psi_A^{\phantom{A}B}}_{(1)}
\ip{f_{(1)}}{{\psi_A^{\phantom{A}B}}_{(2)}} f_{(2)} z_B
\right)
\ff
& \equalref{omegadef} &
\epsilon({\psi_A^{\phantom{A}B}}_{(1)})
\ip{f_{(1)}}{{\psi_A^{\phantom{A}B}}_{(2)}} f_{(2)} z_B
\ff
\fe 
\ip{f_{(1)}}{\psi_A^{\phantom{A}B}} f_{(2)} z_B
\nn
\, ,
\eae
so that
\bae
\Delta\left(\Omega(\hat{f}(\hat{z}) \hat{z}_A)\right)
\fe
\ip{f_{(1)}}{\psi_A^{\phantom{A}B}}
(f_{(2)} {z_B}_{(1)} \otimes f_{(3)} {z_B}_{(2)})
\ff
& \equalref{Deltadef} &
\ip{f_{(1)}}{\psi_A^{\phantom{A}B}}
(f_{(2)} z_B \otimes f_{(3)} 
+
f_{(2)} \otimes f_{(3)} z_B)
\label{DOproof1}
\, .
\eae
\noindent On the other hand,
\bae
\big( \Omega \otimes \Omega \big) 
\hat{\Delta}\left(
\hat{f}(\hat{z}) \hat{z}_A
\right)
& \equalref{Omegadef1} &
(\omega \otimes \omega)
\left(
\hat{f}_{(1)}(Z) Z_A \otimes \hat{f}_{(2)}(Z) +
\hat{f}_{(1)}(Z) \otimes \hat{f}_{(2)}(Z) Z_A
\right)
\ff
& \equalref{omegaprop1} &
(\omega \otimes \omega)
\left(
\omega\big(\hat{f}_{(1)}(Z)\big) Z_A \otimes \hat{f}_{(2)}(Z) +
\hat{f}_{(1)}(Z) \otimes \omega\big(\hat{f}_{(2)}(Z)\big) Z_A
\right)
\ff
\fe 
(\omega \otimes \omega)
\left(
f_{(1)}(z) Z_A \otimes f_{(2)}(z)+
f_{(1)}(z)  \otimes f_{(2)}(z) Z_A
\right)
\ff
& \equalref{actder} &
\ip{f_{(1)}}{\psi_A^{\phantom{A}B}} f_{(2)} z_B \otimes f_{(3)}
+ f_{(1)} \otimes \ip{f_{(2)}}{\psi_A^{\phantom{A}B}}f_{(3)} z_B
\, ,
\label{DOproof2}
\eae
where the third equality used the induction hypothesis. The proposition now follows by comparison of~(\ref{DOproof1}), (\ref{DOproof2}), and the coassociativity of the coproduct.
\end{proof}
\textbf{Proposition~\ref{trace}.}
\begin{proof}
We first show that
\be\label{intfgK}
\bolo{\hat{f}\hat{g}}
=
\bolo{K(\hat{g}) \hat{f}}
\, ,
\ee
where
\be\label{Kdef}
K(\hat{g}) \equiv
\ip{
S(\hat{g}^{(1)}_{\phantom{(1)}(1)})
}{\hat{g}^{(2)'}}
\hat{g}^{(1)}_{\phantom{(1)}(2)}
\, ,
\ee
with $S$ denoting the antipode and $\hat{g}^{(1)}_{\phantom{(1)}(1)} \otimes \hat{g}^{(1)}_{\phantom{(1)}(2)}\otimes \hat{g}^{(2)'} \equiv (\Delta \otimes \operatorname{id})\Delta_\calA(\hat{g})$. Indeed,
\bae
\bolo{\hat{f}\hat{g}}
& \equalref{hFcrs} &
\ip{\hat{f}_{(1)}}{\hat{g}^{(2)'}}
\bolo{\hat{g}^{(1)} \hat{f}_{(2)} } 
\ff
\fe
\ip{
\hat{f}_{(1)} S(\hat{f}_{(2)}) S(\hat{g}^{(1)}_{\phantom{(1)}(1)})}{\hat{g}^{(2)'}}
\bolo{\hat{g}^{(1)}_{\phantom{(1)}(2)} \hat{f}_{(3)}}
\ff
\fe
\ip{
S(\hat{g}^{(1)}_{\phantom{(1)}(1)})}{\hat{g}^{(2)'}}
\bolo{
\hat{g}^{(1)}_{\phantom{(1)}(2)} \hat{f}}
\, ,
\eae
where in the second line, we used the invariance of the integral to multiply $\hat{f}_{(1)}$ from the right by $S(1)$.
By repeated application of~(\ref{intfgK}), the multiplicativity of $K$ is established%
\footnote{%
Direct proof of the multiplicativity of $K$ from its definition, Eq.~(\ref{Kdef}), is left as a non-trivial exercise for the reader ({\em Hint:} the Jacobi identity is needed).%
}, so, given its obvious linearity, it suffices to compute it for the generators $Z_A$. We find
\be\label{DDZ}
(\Delta \otimes \operatorname{id})\Delta_\calA(Z_A)
=
\Delta(Z_B) \otimes M^{B}_{\phantom{B}A}
=
Z_B \otimes 1 \otimes M^{B}_{\phantom{B}A}
+
1 \otimes Z_B \otimes M^{B}_{\phantom{B}A}
\, ,
\ee
so that 
\bae
K(Z_A)
\fe
\ip{-Z_B}{M^{B}_{\phantom{B}A}}1+
\ip{1}{M^{B}_{\phantom{B}A}}Z_B
\ff
&  \equalref{Mepsilon} &
Z_A -\ip{Z_B}{M^{B}_{\phantom{B}A}}
\ff
& \equalref{TMip} &
Z_A - f_{BA}^{\phantom{BA}B}
\ff
\fe
Z_A +c_A
\, ,
\eae
implying $K(\hat{g}(Z))=\hat{g}(Z+c)$.
\end{proof}
\textbf{Eq.~(\ref{remfor}).}
\begin{proof}
The Hopf algebras $\calR$, of functions on the real line, with coordinate $x$, and $\calP$, of functions of $\partial_x$, both equipped with the standard Hopf structure, are dually paired, via the inner product
\be\label{APipdef}
\ip{\partial_x}{x}=1
\, ,
\ee
extended to $\calP \otimes \calR$ in the standard fashion,
\be\label{pxsf}
\ip{g(\partial_x)}{f(x)}
=
\left( g(\partial_x) \triangleright f(x)
\right)_{x=0}
\, .
\ee 
It is easily established that $\{x^n\}$, $\{\frac{1}{n!}\partial_x^n\}$, are dual bases, and so are $\{x^{(n)}\}$, $\{\frac{1}{n!}D_x^n\}$, where $x^{(n)}=x(x-1)\ldots (x-(n-1))$, and $D_x \triangleright f(x)=f(x+1)-f(x)=(e^{\partial_x}-1) \triangleright f$. The desired identity then follows from the invariance of the canonical element $C$ under linear change of bases,
\bae
C
\fe
\sum_{n=0}^\infty
\frac{1}{n!}x^n \otimes \partial_x^n
\ff
\fe
e^{x \otimes \partial_x}
\ff
\fe
\sum_{n=0}^\infty
\frac{1}{n!}x^{(n)} \otimes D_x^n
\ff
\fe
\sum_{n=0}^\infty
\frac{1}{n!}x^{(n)} \otimes (e^{\partial_x}-1)^n
\, ,
\eae
with the identification $x \otimes 1 \mapsto z$, $1 \otimes \partial_x \mapsto a$.
\end{proof}
%%%%%%%%%%%%%%%%%%%%%%%%%%%%%%%%%%%%%%%%%%%%%%%%%%%%%%%%%%%%%%
%%%%%%%%%%%%%%%%%%%%%%%%%%%%%%%%%%%%%%%%%%%%%%%%%%%%%%%%%%%%%%
\section{Further Examples}
\label{FE}
%%%%%%%%%%%%%%%%%%%%%%%%%%%%%%%%%%%%%%%%%%%%%%%%%%%%%%%%%%%%%%
%%%%%%%%%%%%%%%%%%%%%%%%%%%%%%%%%%%%%%%%%%%%%%%%%%%%%%%%%%%%%%
We present two more examples dealing with integration in $\kappa$-Minkowski spacetime. 
%%%%%%%%%%%%%%%%%%%%%%%%%%%%%%%%%%%%%%%%%%%%%%%%%%%%%%%%%%%%%%
\begin{myexample}{}{appex3}
%%%%%%%%%%%%%%%%%%%%%%%%
In the context of Ex.~\ref{PsixkM}, let $\hat{G} = e^{-\xg^2} e^{-\tg^2} \xg^m$. Then, some straightforward algebra gives
\bae
\Omega(\hat{G}) 
\fe 
e^{-x^2} x^m e^{-(t+ m \lambda)^2}
\ff
\bolo{\hat{G}} 
\fe 
\pi \left(\partial_r^m \triangleright e^{\frac{r^2}{4}}\right)_{r=0} 
\ff
\Delta(\hat{G}) 
\fe 
\sum_{k=0}^{m}\sum_{a=0}^{\infty}\sum_{b=0}^{\infty} 
\binom{m}{k}
\frac{(-2)^a (-2)^b}{a! b!}  
\left( e^{-\xg^2}\xg^a  e^{-\tg^2} \tg^b \xg^k  \otimes e^{-\xg^2} \xg^a   e^{-\tg^2} \tg^b \xg^{m-k} \right) 
\ff
\Omega(e^{-\xg^2}\xg^a  e^{-\tg^2} \tg^b \xg^k) 
\fe 
e^{-x^2}x^{a+k}  e^{-(t+  k\lambda)^2}(t+  k \lambda)^b 
\ff
\bolo{e^{-\xg^2}\xg^a  e^{-\tg^2} \tg^b \xg^k} 
\fe  
\pi \left(\partial_s^{a+k} \triangleright  e^{\frac{s^2}{4}}\right)_{s=0} \left(\partial_r^b \triangleright e^{\frac{r^2}{4}}\right)_{r=0} 
\ff
\bolo{\hat{G}_{(1)}} \hat{G}_{(2)}
\fe 
\pi e^{-\xg^2} \left( \left( e^{-2 \xg \partial_r}  e^{-\tg^2} \left( e^{-2 \tg \partial_s} \triangleright e^{\frac{s^2}{4}}\right)_{s=0} (\partial_r+\xg)^m \right) \triangleright e^{\frac{r^2}{4}}\right)_{r=0} 
\ff
\fe 
\pi \left( (\partial_r+\xg)^m \triangleright e^{\frac{r^2}{4}-r \xg}\right)_{r=0} 
\ff
\bolo{\hat{G}_{(1)}} G_{(2)} \fe
\pi \sum_{k=0}^{m}
\binom{m}{k}
 e^{-(t+  k \lambda)^2} \left(e^{-2 (t+  k \lambda)\partial_s}
\triangleright e^{\frac{s^2}{4}}\right)_{s=0} e^{-x^2} \left(
x^{m-k} \partial_r^k  e^{-2 x \partial_r} \triangleright  
e^{\frac{r^2}{4}}\right)_{r=0} 
\ff
\fe
\pi \left((\partial_r+x)^m \triangleright e^{\frac{r^2}{4}-r x}\right)_{r=0} \nn \, .
\eae
However, one easily computes
\bae
\left( (\partial_r+c)^m \triangleright e^{\frac{r^2}{4}-r c}\right)_{r=0} \fe
\sum_{k=0}^{m}
\binom{m}{k}
c^{m-k} \left(\partial_r^k \triangleright e^{\frac{r^2}{4}} e^{-r c}\right)_{r=0}\ff
\fe 
\sum_{k=0}^{m}\sum_{l=0}^{k}
\binom{m}{k}
\binom{k}{l}
c^{m-k} \left(\partial_r^l \triangleright e^{\frac{r^2}{4}} \right)_{r=0} \left(\partial_r^{k-l} \triangleright e^{-r c} \right)_{r=0} \ff
\fe 
\sum_{k=0}^{m}\sum_{l=0}^{k}
\binom{m}{k}
\binom{k}{l}
c^{m-k}(-c)^{k-l} \left(\partial_r^l \triangleright e^{\frac{r^2}{4}} \right)_{r=0}  \ff
\fe 
\sum_{k=0}^{m}
\binom{m}{k}
c^{m-k} \left((\partial_r-c)^k \triangleright e^{\frac{r^2}{4}} \right)_{r=0}  \ff
\fe 
\left(\partial_r^m \triangleright e^{\frac{r^2}{4}}\right)_{r=0} \nn \, ,
\eae
so that
\be 
\bolo{\hat{G}}
=
\bolo{\hat{G}_{(1)}} \hat{G}_{(2)}
=
\bolo{\hat{G}_{(1)}} G_{(2)}
= 
\pi \left(\partial_r^m \triangleright e^{\frac{r^2}{4}}\right)_{r=0} \, . \nn
\ee
Consider also the integral
\bae
\bolo{\hat{G}(\xg+\lambda c_x, \tg+\lambda c_t)} 
\fe 
\bolo{e^{-\xg^2}  e^{-(\tg + \lambda)^2} \xg^m} 
\ff
\fe 
\int e^{-x^2} x^m e^{-(t+ \lambda + m \lambda)^2}  
\ff
\fe 
\pi \left(\partial_r^m \triangleright e^{\frac{r^2}{4}}\right)_{r=0}\nn \, ,
\eae
showing that $\bolo{\hat{G}( \hat{z})} 
= 
\bolo{\hat{G}( \hat{z}+\lambda c)}$.
Finally, if we write $\hat{G}=\hat{g} \, \hat{h}$ with $\hat{g}=e^{-\xg^2} e^{-\tg^2}$ and $\hat{h}=\xg^m$, and compute
\bae
\bolo{\hat{h}(\xg,\tg) \hat{g}(\xg-\lambda c_x, \tg-\lambda c_t)} 
\fe 
\bolo{\xg^m e^{-\xg^2} e^{-(\tg- \lambda)^2}} 
\ff
\fe  
\int x^m e^{-x^2}  e^{-(t- \lambda)^2} 
\ff
\fe \pi \left(\partial_r^m \triangleright e^{\frac{r^2}{4}}\right)_{r=0} \nn \, ,
\eae
we obtain, $\bolo{\hat{g}( \hat{z})\hat{h}( \hat{z})} 
= 
\bolo{ \hat{h}( \hat{z}) \hat{g}( \hat{z}-\lambda c)}$.
\end{myexample}
%%%%%%%%%%%%%%%%%%%%%%%%%%%%%%%%%%%%%%%%%%%%%%%%%%%%%%%%%%%%%%%%%%%%
\begin{myexample}{}{appex4}
%%%%%%%%%%%%%%%%%%%%%%%%%%
Consider the function $\hat{H}(\xg, \tg)=e^{-\xg^2} e^{-\tg^2} \xg^m e^{ \alpha \tg}$, with
\bae
\bolo{\hat{H}(\xg, \tg)} 
\fe 
\bolo{e^{-\xg^2} e^{-\tg^2} \xg^m e^{ \alpha \tg}} 
\ff
\fe 
\int e^{-x^2} x^m e^{-(t+ m \lambda)^2} e^{ \alpha t}  
\ff
\fe 
\left(\int x^m e^{-x^2}\right)  \sqrt{\pi} e^{\frac{\alpha^2}{2}} e^{- \alpha m \lambda}  \nn \, .
\eae
Now write $\hat{H}= \hat{g} \, \hat{k}$, with $\hat{g}=e^{-\xg^2}e^{-\tg^2} $ and $\hat{k}=\xg^m e^{ \alpha \tg}$, and compute the integral
\bae
\bolo{\hat{k}(\xg,\tg) \hat{g}(\xg-\lambda c_x, \tg-\lambda c_t)}
\fe 
\bolo{\xg^m e^{ \alpha \tg } e^{-\xg^2}e^{-(\tg-\lambda)^2}}
\ff
\fe  
\int x^m  e^{-e^{2 \alpha \lambda}x^2} e^{ \alpha t} e^{-(t-\lambda)^2}    
\ff
\fe 
e^{- \alpha \lambda(m+1) }\int x^m e^{-x^2} e^{ \alpha t}e^{-(t-\lambda) ^2}    
\ff
\fe 
\left(\int x^m e^{-x^2}\right)  \sqrt{\pi} e^{\frac{\alpha^2}{2}} e^{- \alpha m \lambda}  \nn \, .
\eae
Once again, $\bolo{\hat{g}( \hat{z})\hat{k}( \hat{z})} 
= 
\bolo{\hat{k}( \hat{z}) \hat{g}( \hat{z}-\lambda c)}$.
Similarly
\bae
\bolo{\hat{k}(\xg+\lambda c_x,\tg+\lambda c_t) \hat{g}(\xg, \tg)} 
\fe 
\bolo{\xg^m e^{ \alpha (\tg+\lambda) } e^{-\xg^2}e^{-\tg^2}} 
\ff
\fe  
\int x^m  e^{-e^{2 \alpha \lambda}x^2} e^{ \alpha (t+\lambda)} e^{-t^2}     
\ff
\fe 
e^{- \alpha m \lambda } \int x^m e^{-x^2} e^{\alpha t}e^{-t^2}  
\ff
\fe 
\left(\int x^m e^{-x^2}\right)  \sqrt{\pi} e^{\frac{\alpha^2}{2}} e^{- \alpha m \lambda}  
\nn \, ,
\eae
so $\bolo{\hat{g}( \hat{z})\hat{k}( \hat{z})} 
= 
\bolo{\hat{k}( \hat{z}+\lambda c) \hat{g}( \hat{z})}$.
\end{myexample}
%%%%%%%%%%%%%%%%%%%%%%%%%%%%%%%%%%%%%%%%%%%%%%%%%%%%%%%%%%%%%%
%%%%%%%%%%%%%%%%%%%%%%%%%%%%%%%%%%%%%%%%%%%%%%%%%%%%%%%%%%%%%%
%%%%%%%%%%%%%%%%%%%%%%%%%%%%%%%%%%%%%%%%%%%%%%%%%%%%%%%%%%%%%%
%\begin{comment}
%\bibliographystyle{plain}
%\bibliography{/windows/D/chryss/papers/strings,%
%/windows/D/chryss/papers/dqarticle,%
%/windows/D/chryss/papers/dqbook,%
%/windows/D/chryss/papers/dqthesis,%
%/windows/D/chryss/papers/dqprocentry,%
%/windows/D/chryss/papers/dqproceeding}
%\end{comment}
%%%%%%%%%%%%%%%%%%%%%%%%%%%%%%%%%%%%%%%%%%%%%%%%%%%%%%%%%%%%%%
\providecommand{\href}[2]{#2}
\begingroup\raggedright
\endgroup
%%%%%%%%%%%%%%%%%%%%%%%%%%%%%%%%%%%%%%%%%%%%%%%%%%%%%%%%%%%%%%
\end{document}